\newcommand{\ket}[1]{{\left\vert{#1}\right\rangle}}
\newcommand{\qw}[1][-1]{\ar @{-} [0,#1]}
\newcommand{\qwx}[1][-1]{\ar @{-} [#1,0]}
\newcommand{\gate}[1]{*+<.6em>{#1} \POS ="i","i"+UR;"i"+UL **\dir{-};"i"+DL **\dir{-};"i"+DR **\dir{-};"i"+UR **\dir{-},"i" \qw}
\newcommand{\meter}{*=<1.8em,1.4em>{\xy ="j","j"-<.778em,.322em>;{"j"+<.778em,-.322em> \ellipse ur,_{}},"j"-<0em,.4em>;p+<.5em,.9em> **\dir{-},"j"+<2.2em,2.2em>*{},"j"-<2.2em,2.2em>*{} \endxy} \POS ="i","i"+UR;"i"+UL **\dir{-};"i"+DL **\dir{-};"i"+DR **\dir{-};"i"+UR **\dir{-},"i" \qw}
\newcommand{\control}{*!<0em,.025em>-=-<.2em>{\bullet}}
\newcommand{\ctrl}[1]{\control \qwx[#1] \qw}
\newcommand{\targ}{*+<.02em,.02em>{\xy ="i","i"-<.39em,0em>;"i"+<.39em,0em> **\dir{-}, "i"-<0em,.39em>;"i"+<0em,.39em> **\dir{-},"i"*\xycircle<.4em>{} \endxy} \qw}
\newcommand{\multigate}[2]{*+<1em,.9em>{\hphantom{#2}} \POS [0,0]="i",[0,0].[#1,0]="e",!C *{#2},"e"+UR;"e"+UL **\dir{-};"e"+DL **\dir{-};"e"+DR **\dir{-};"e"+UR **\dir{-},"i" \qw}
\newcommand{\ghost}[1]{*+<1em,.9em>{\hphantom{#1}} \qw}
\newcommand{\Qcircuit}{\xymatrix @*=<0em>}
\newtheorem{theorem}{Theorem}[section]
\newtheorem{corollary}[theorem]{Corollary}
\newtheorem{claim}[theorem]{Claim}
\newtheorem{definition}[theorem]{Definition}
\newtheorem{lemma}[theorem]{Lemma}
\newcommand{\E}{\mathbb{E}}
\newcommand{\p}{\mathbb{P}}
\title{Quantum versus Randomized Communication Complexity, with Efficient Players}
\date{}
\author{Uma Girish\thanks{Department of Computer Science, Princeton University. Research supported by the Simons Collaboration on Algorithms and Geometry, by a Simons Investigator Award and by the National Science Foundation grant No. CCF-1714779.} 
\and Ran Raz\thanks{Department of Computer Science, Princeton University. Research supported by the Simons Collaboration on Algorithms and Geometry, by a Simons Investigator Award and by the National Science Foundation grant No. CCF-1714779.} 
\and Avishay Tal\thanks{Department of Electrical Engineering and Computer Sciences, University of California at Berkeley. Part of this work was done when the author was a postdoc at the Department of Computer Science, Stanford University. Partially supported by a Motwani Postdoctoral Fellowship and by NSF grant CCF-1763311.}}
\begin{document}
\maketitle

\begin{abstract}
We study a new type of separations between quantum and classical communication complexity, separations that are obtained using quantum protocols where all parties are {\bf efficient}, in the sense that they can be implemented by small quantum circuits, with oracle access to their inputs.
Our main result qualitatively  matches the strongest known separation between quantum and classical communication complexity~\cite{gavinsky} and is obtained using a quantum protocol where all parties are efficient. More precisely, we
give an explicit partial Boolean function~$f$ over inputs of length $N$, such that:
\begin{enumerate}[(1)]
\item
$f$ can be computed by a simultaneous-message quantum protocol with communication complexity $\mbox{polylog}(N)$
(where at the beginning of the protocol Alice and Bob also have  $\mbox{polylog}(N)$ entangled EPR pairs).
\item
Any classical randomized protocol for $f$, with any number of rounds, has communication complexity at least $\tilde{\Omega}\left(N^{1/4}\right)$.
\item
All parties in the quantum protocol of Item~(1) (Alice, Bob and the referee) can be implemented by quantum circuits of size $\mbox{polylog}(N)$ (where Alice and Bob have oracle access to their inputs).
\end{enumerate}

Items~(1), (2) qualitatively match the strongest known separation between quantum and classical communication complexity, proved by Gavinsky~\cite{gavinsky}. Item~(3) is new. (Our result is incomparable to the one of Gavinsky. While he obtained a quantitatively better lower bound of $\Omega\left(N^{1/2}\right)$ in the classical case, the referee in his quantum protocol is inefficient).

Exponential separations of quantum and classical communication complexity have been studied in numerous previous works, but to the best of our knowledge the efficiency of the parties in the quantum protocol has not been addressed, and in most previous separations the quantum parties seem to be inefficient. The only separations that we know of that have efficient quantum parties are the recent separations that are based on lifting~\cite{bpp,bppip}. However, these separations seem to require quantum protocols with at least two rounds of communication, so they imply a separation of two-way quantum and classical communication complexity but they do not give the stronger separations of simultaneous-message quantum communication complexity vs. two-way classical communication complexity (or even one-way quantum communication complexity vs. two-way classical communication complexity).

Our proof technique is completely new, in the context of communication complexity,  and is based on techniques from~\cite{raztal}. Our function $f$ is based on a lift of the \textsc{forrelation} problem, using \textsc{xor} as a gadget.

\end{abstract}

\section{Introduction}

Exponential separations between quantum and classical communication complexity have been established in various models and settings. These separations give explicit examples of partial functions that can be computed by quantum protocols with very small communication complexity, while any classical randomized protocol requires significantly higher communication complexity. However, to the best of our knowledge, in all these works the efficiency of the quantum players in the quantum protocol has not been addressed and in most of  these separations, the quantum players are inefficient.

Communication complexity studies the amount of communication needed to
perform computational tasks that depend on two (or more) inputs, each given to a different player. The efficiency of the players in a communication complexity protocol is usually not addressed. If
the players need to read their entire inputs, their time complexity is at least the length of the inputs.
However,
the inputs may be represented compactly by a black box and (particularly in the quantum case) we can hope for
players that can be implemented very efficiently by small (say, poly-logarithmic size) quantum circuits, with oracle access to their inputs.

Our main result qualitatively matches the strongest known separation between quantum and classical communication complexity~\cite{gavinsky} and is obtained using quantum protocols where all players are efficient. To prove our results we use a completely different set of techniques, based on techniques from the recent oracle separation of BQP and PH~\cite{raztal}.

\subsection{Previous Work}

The relative power of quantum and classical communication complexity has been studied in numerous of works. While it is unknown whether quantum communication can offer exponential advantage over randomized communication for total functions, a series of works
gave explicit examples of partial Boolean functions (promise problems)
that have quantum protocols with very small communication complexity,
while any classical protocol requires exponentially higher communication complexity. The history of exponential advantage of quantum communication, that is most relevant to our work, is briefly summarized below.

Buhrman, Cleve and Wigderson gave the first (exponential) separation between zero-error quantum communication complexity and classical deterministic communication complexity~\cite{buhrman}. Raz gave the first exponential separation between two-way quantum communication complexity and two-way randomized communication complexity~\cite{raz}. Bar-Yossef et al~\cite{Bar-YossefJK04} (for search problems) and
Gavinsky et al~\cite{gavinskyetal} (for promise problems) gave the first (exponential) separations between one-way quantum communication complexity and one-way randomized communication complexity. Klartag and Regev gave the first (exponential) separation between one-way quantum communication complexity and two-way randomized communication complexity~\cite{klartagregev}. Finally, Gavinsky gave an (exponential) separation between simultaneous-message quantum communication complexity 
and two-way randomized communication complexity~\cite{gavinsky}. 

We note that Gavinsky's work is the strongest separation known today and essentially subsumes the separations discussed above.
More precisely, Gavinsky~\cite{gavinsky} gave an explicit partial Boolean function~$f$ over inputs of length $N$, such that:
\begin{enumerate}
\item
$f$ can be computed by a simultaneous-message quantum protocol with communication complexity $\mbox{polylog}(N)$: 
Alice and Bob simultaneously send quantum messages of length $\mbox{polylog}(N)$ to a referee, who performs a quantum measurement on the messages and announces the answer. (At the beginning of the protocol Alice and Bob also have  $\mbox{polylog}(N)$ entangled EPR pairs).

We note that this also implies a one-way quantum protocol where Alice sends a message of length $\mbox{polylog}(N)$ qubits to Bob, who performs a measurement and announces the answer (or vice versa).
\item
Any classical randomized protocol for $f$ has communication complexity at least $\Omega\left(N^{1/2}\right)$.
\end{enumerate}

A drawback of Gavinsky's separation, in the context of our work, is that the referee in his quantum protocol is inefficient as it is required to perform $O(N)$ quantum operations (and this seems to be crucial in his lower bound proof).

As mentioned before, to the best of our knowledge, the efficiency of the quantum players has not been addressed in previous works on separations of quantum and classical communication complexity.
The only separations
that we know of that do have efficient quantum parties are the separations that follow from the recent randomized query-to-communication lifting theorems of~\cite{bpp,bppip}, applied to problems for which we know that quantum decision trees offer an  exponential advantage over randomized ones, 
such as the \textsc{forrelation} problem of~\cite{aaronson10,aaronsonambainis}.
However, lifting with the gadgets used in~\cite{bpp,bppip} seems to require quantum protocols with two rounds of communication. Thus, these theorems only imply a separation of two-way quantum and classical communication complexity and do not give the stronger separations of simultaneous-message quantum communication complexity vs. two-way classical communication complexity (or even one-way quantum communication complexity vs. two-way classical communication complexity).

\subsection{Our Result}

We recover Gavinsky's state of the art separation, using entirely different techniques. While the parameters in our bounds are weaker, our quantum protocol is {\it efficient}, in the sense that it involves just $\mbox{polylog}(N)$ amount of work by Alice, Bob and the referee,  when the players have blackbox access to their inputs. In other words, the output of the entire simultaneous protocol can be described by a $\mbox{polylog}(N)$ size quantum circuit,
with oracle access to the inputs.

More precisely, our main result gives an explicit partial Boolean function~$f$ over inputs of length $N$, such that:
\begin{enumerate}
\item
As in Gavinsky's work,
$f$ can be computed by a simultaneous-message quantum protocol with communication complexity $\mbox{polylog}(N)$:
Alice and Bob simultaneously send quantum messages of length $\mbox{polylog}(N)$ to a referee, who performs a quantum measurement on the messages and announces the answer. (At the beginning of the protocol Alice and Bob also have  $\mbox{polylog}(N)$ entangled EPR pairs).

As before, this also implies a one-way quantum protocol where Alice sends a message of length $\mbox{polylog}(N)$ qubits to Bob, who performs a measurement and announces the answer (or vice versa).
\item
Any classical randomized protocol for $f$ has communication complexity at least $\tilde{\Omega}\left(N^{1/4}\right)$.
\item
All parties in the quantum protocol of Item~(1) (Alice, Bob and the referee) can be implemented by quantum circuits of size $\mbox{polylog}(N)$ (where Alice and Bob have oracle access to their input).
\end{enumerate}

The problem that we define is a lift of the \textsc{forrelation} problem of~\cite{aaronson10,aaronsonambainis,raztal} with \textsc{xor} as the gadget.
Our proof technique follows the Fourier-analysis framework of~\cite{raztal}. Our proof offers an entirely new and possibly simpler approach for communication complexity lower bounds. We believe this technique may be applicable in a broader setting.
We note that lower bounds for lifting by \textsc{xor}, using a Fourier-analysis approach, were previously studied in~\cite{DBLP:journals/cc/Raz95,DBLP:journals/siamcomp/HatamiHL18}.

\subsection{Our Communication Complexity Problem}

Let $N=2^n$ and $H_N$ be the $N\times N$ normalized Hadamard matrix. Let $x=(x_1,x_2)$ be an input where $x_1,x_2\in \{-1, 1\}^N$. The forrelation $forr(x)$ of a vector $x$ is defined as follows and measures how correlated the second half is with the Hadamard transform of the first half.
\[forr(x):=\frac{1}{N} \langle H_N(x_1)|x_2\rangle \]
The communication problem for which our separation holds is a lift of the forrelation problem of~\cite{raztal}, with XOR as the gadget. Let $x,y \in \{-1, 1\}^{2N}$. Alice gets $x$ and Bob gets $y$ and their goal is to compute the partial function $F$ defined by
\[F(x,y):=\begin{cases} 1 & \text{ if } forr(x\cdot y ) \ge  \frac{1}{200} \cdot \frac{1}{\ln N} \\ -1 & \text{ if } forr(x\cdot y) \le \frac{1}{400} \cdot \frac{1}{\ln N} \end{cases} \]
Here $x\cdot y$ refers to the coordinate-wise product of the vectors $x,y$. The quantum upper bound on $F$ follows from the fact that the XOR of the inputs can be computed by a simultaneous-message quantum protocol, when the players share entanglement, and the fact that $forr(x)$ can be estimated by a small size quantum circuit~\cite{aaronson10,aaronsonambainis,raztal}.

\subsection{An Overview of the Lower Bound}
We briefly outline the proof of the lower bound. We use the forrelation distribution $\mathcal{D}$ on $\{-1,1\}^{2N}$ as defined by ~\cite{raztal}. We define a distribution $\mathcal{V}$ on inputs to the communication problem, obtained by sampling $z\sim\mathcal{D}$, and $x\in \{-1,1\}^{2N}$ uniformly at random, and setting $y:=x\cdot z$. Alice gets $x$ and Bob gets $y$. It can be shown that the distribution $\mathcal{V}$ has considerable support over the yes instances of $F$, while the uniform distribution $\mathcal{U}$ on $\{-1,1\}^{4N}$ has large support over the no instances of $F$. This fact along with the following theorem implies a lower bound on the randomized communication cost of $F$.

{\bf Theorem [Informal]:}  {\it No deterministic protocol of cost $o(N^{1/4})$ has considerable advantage in distinguishing $\mathcal{V}$ from $\mathcal{U}$. } 

We now outline the proof of this theorem. Any cost $c$ protocol induces a partition of the input space into at most $2^c$ rectangles. Let $A\times B$ be any rectangle, and let $\mathbbm{1}_A, \mathbbm{1}_B: \{-1,1\}^{2N}\rightarrow \{0,1\}$ be the indicator functions of $A$ and $B$ respectively. Note that for all distributions $\mathcal{S}$ on $\{-1,1\}^{2N}$, we have
\[ \underset{z\sim \mathcal{S},}{\E} \underset{x\sim U_{2N}}{\E} \left[\mathbbm{1}_A(x)\mathbbm{1}_B(x\cdot z)\right]=\underset{z\sim \mathcal{S}}{\E}  \left[(\mathbbm{1}_A * \mathbbm{1}_B)(z)\right] \]

Here, the notation $f*g$ refers to the convolution of Boolean functions $f$ and $g$. This identity implies that our goal is to show that the expectation of the function $(\mathbbm{1}_A * \mathbbm{1}_B)(z)$ over a uniformly distributed $z$ is close to the expectation over $z\sim\mathcal{D}$. An essential contribution of the works of \cite{raztal} and~\cite{chlt} is the following result.  For any family of functions $\mathcal{F}$ that is closed under restrictions, to show that the family is fooled by the forrelation distribution, it suffices to bound the $\ell_1$-norm of the second level Fourier coefficients of the family. More precisely, the maximum advantage of a function $f\in \mathcal{F}$ in distinguishing the uniform distribution and $\mathcal{D}$, is at most $O\left(\frac{1}{\sqrt{N}}\right)$ times the maximum second level Fourier mass of a function $f\in \mathcal{F}$. Since small cost communication protocols form a family of functions closed under restrictions, the same reasoning applies here.

We now describe how to bound the second level Fourier mass corresponding to a small cost protocol. Let $A\times B$ be a rectangle. An important property of the convolution of two functions $f,g$ is that for all subsets $S\subseteq [n]$, we have $\widehat{f*g}(S)=\widehat{f}(S)\widehat{g}(S)$. This, along with Cauchy-Schwarz implies that
\[  \sum_{|S|=2} \left| \widehat{\mathbbm{1}_A * \mathbbm{1}_B}(S) \right| = \sum_{|S|=2} \left| \widehat{\mathbbm{1}_A}(S) \widehat{\mathbbm{1}_B}(S)\right| \le  \left( \sum_{|S|=2}\widehat{\mathbbm{1}_A}(S)^2 \right)^{1/2} \left( \sum_{|S|=2}\widehat{\mathbbm{1}_B}(S)^2 \right)^{1/2} \]
We then use a well known inequality on Fourier coefficients. It appears as `Level-k Inequalities' in Ryan Odonnell's book \cite[Chapter~9.5]{odonnell} and it states that for a function $f:\{-1,1\}^n\rightarrow \{0,1\}$ with expectation $\E[f]=\alpha$, for any $k\le 2\ln (1/\alpha)$, we have $ \sum_{|S|=k} \left( \widehat{f}(S)\right)^2 \le O(\alpha^2 \ln^k(1/\alpha))$. For simplicity, assume that $|A|=|B|=2^{(n-c)/2}$. The previous paragraphs and the assumption that $\E[\mathbbm{1}_A], \E[\mathbbm{1}_B]=  \frac{1}{2^{c/2}}$ imply that the advantage of a rectangle is at most $O\left(\frac{1}{\sqrt{N}}\frac{1}{2^c}c^2\right).$ Adding the contributions from all rectangles implies that the advantage of a cost $c$ protocol is at most $O\left(\frac{c^2}{\sqrt{N}}\right)$. This implies that every protocol of cost $o(N^{1/4})$ has advantage at most $o(1)$ in distinguishing between $\mathcal{U}$ and $\mathcal{V}$. The bound in the case of a general partition follows from a concavity argument. This completes the proof overview.

We conjecture that the correct randomized communication complexity for this problem is $\tilde{\Omega}(\sqrt{N})$ and that the above proof technique can be strengthened to show this. One way to do this would be to show a better bound on the Fourier coefficients of deterministic communication protocols. In particular, it would suffice to show a bound of $O(c\cdot  \mathrm{poly}\log(N))$ on the second level Fourier mass of protocols with $c$-bits of communication.

\section{Preliminaries} 
For $n\in \mathbb{N}$, let $[n]$ denote the set $\{1,2,\ldots,n\}$. For a vector $x\in \mathbb{R}^n$ and $i\in [n]$, we refer to the $i$-th coordinate of $x$ by either $x(i)$ or $x_i$. For a subset $S\subset [n]$, let $x_S\in\mathbb{R}^{|S|}$ be the restriction of $x$ to coordinates in $S$. For vectors $x,y\in \mathbb{R}^n$, let $x\cdot y$ be their point-wise product, i.e., the vector whose $i$-th coordinate is $x_i y_i$. Let $\langle x |y\rangle$ be the real inner product $\sum_i x_i y_i$ between $x$ and $y$. Let $v^{-1}$ be the coordinate-wise inverse of a vector $v\in (\mathbb{R}\setminus 0)^n$. 

\subsection{Fourier Analysis on the Boolean Hypercube}
The set $\{-1, 1\}^n$ is referred to as the Boolean hypercube in $n$ dimensions, or the $n$-dimensional hypercube. We  sometimes refer to it by $\{0,1\}^n$, using the bijection mapping $(x_1,\ldots,x_n)\in \{0,1\}^n$ to $((-1)^{x_1},\ldots,(-1)^{x_n})\in \{-1, 1\}^n$. We also represent elements of $\{-1, 1\}^n$ by elements of $[2^n]$, using the bijection mapping $((-1)^{x_1},\ldots,(-1)^{x_n})\in \{-1,1\}^n$ to $1+\sum_{i=1}^n 2^{i-1} x_i \in [2^n]$. We typically use $N$ to denote $2^n$. Let $\mathbb{I}_n$ denote the $n\times n$ identity matrix. Let $U_{n}$ be the uniform distribution on $\{-1, 1\}^n$. Let $\mathcal{F}:=\{F: \{-1, 1\}^n\rightarrow \mathbb{R}\}$ be the set of all functions from the $n$-dimensional hypercube to the real numbers. This is a real vector space of dimension $2^n$. We define an inner product over this space. For every $f,g, \in \mathcal{F}$, let
\[ \langle f, g \rangle : =  \underset{x\sim U_n}{\mathbb{E}} \left[f(x)g(x)\right]\] 
For any universe $\mathcal{U}$ and a subset $S\subseteq \mathcal{U}$, we use $\mathbbm{1}_S : \mathcal{U}\rightarrow \{0,1\}$ to refer to the indicator function of $S$ defined by:
\[  \mathbbm{1}_S(x):= \begin{cases} 1 &\text{ if } x\in S \\ 0 &\text{ otherwise } \end{cases} \]
The set of indicator functions of singleton sets $\{\mathbbm{1}_{\{a\}}: a\in \{-1, 1\}^n\}$ is the standard orthogonal basis for $\mathcal{F}$. The character functions form an orthonormal basis for $\mathcal{F}$. These are functions $\chi_S:\{-1,1\}^n\rightarrow\{-1,1\}$ associated to every set $S\subseteq [n]$ and are defined at every point $x\in \{-1,1\}^n$ by
$\chi_S(x):= \prod_{i\in S} x_i$. For a function $f\in \mathcal{F}$, and $S\subseteq[n]$, we define its $S$-th Fourier coefficient to be $\widehat{f}(S):=\E_{x\sim U_n} \left[f(x)\chi_S(x)\right]$. Every $f\in \mathcal{F}$ can be expressed as $f(x)=\sum_{S\subseteq [n]} \widehat{f}(S)\chi_S(x)$.  For $f:\{-1, 1\}^n\rightarrow \mathbb{R}$ and $k\in\{0,\ldots,n\}$, let $L_{k}(f):=\sum_{S\subseteq [n],|S|=k}\left|\widehat{f}(S)\right|$ refer to the level $k$ Fourier mass of $f$.

Given functions $f,g:\{-1, 1\}^n\rightarrow \mathbb{R}$, their convolution $f*g:\{-1,1\}^n\rightarrow \mathbb{R}$ is defined as $ f *g(x):= \underset{y\sim U_n}{\E}\left[ f(y) g(y\cdot x) \right] $.  A standard fact about convolution of functions is that $\widehat{f*g}(S)=\widehat{f}(S)\widehat{g}(S)$ for all $S\subseteq [n]$.

\subsection{Quantum Computation}

Let $\mathcal{H}_m$ be the Hilbert space of dimension $2^m$ defined by the complex span of the orthonormal basis $\{\ket{x}:x\in\{-1,1\}^m\}$. We sometimes express these basis elements by integers $\{\ket{i} :i\in[2^m]\}$ by the same correspondence as before. An element in this space is denoted by $\ket{\phi}$ and is a unique complex combination of the vectors $\ket{x}$, where $x$ is a bit string in $\{-1,1\}^m$. We omit the subscript on $\mathcal{H}$ when it is implicit. Pure quantum states on $m$ qubits are described by unit vectors in $\mathcal{H}_m$. We sometimes use the terms register and qubit interchangeably. Note that we have the vector space isomorphism $\mathcal{H}_m\cong \otimes_{i=1}^m \mathcal{H}_1^{(i)}$, where each $\mathcal{H}_1^{(i)}\cong \mathcal{H}_1$. We call $\mathcal{H}_1^{(i)}$ the $i$-th register, or the $i$-th qubit. The evolution of a pure state can be described by either projective or unitary transformations on $\mathcal{H}_m$, a few of which we describe below. 
\begin{itemize}
\item \textsc{Unitary operators}: Unitary operators over $\mathcal{H}_m$ act on pure states in the natural way. They map a pure state $\ket{\phi}$ to a pure state $U(\ket{\phi})$.
\item \textsc{Hadamard operator} $H$: The Hadamard matrix $H_N$ is an $N\times N$ unitary matrix acting on $\mathcal{H}_n$. We let $x$ and $y$ in $\{0,1\}^n$ index rows and columns of $H_N$ respectively. The entries of $H_N$ are as follows.
\[ H_N(x,y) := \begin{cases} \frac{1}{\sqrt{N}}  & \text{ if }  \sum_i x_iy_i\mod 2=0 \\
\frac{-1}{\sqrt{N}} &\text{otherwise} \end{cases}\]
We refer to the single bit unitary operator $H_1$ as simply $H$. We have the identity $H_{2^n}=H^{\otimes n}$. Thus, the action of $H_{2^n}$ on $\mathcal{H}_n$ can be described as the tensor product of the actions of $H$ on $\mathcal{H}_1^{(i)}$ for $i\in [n]$.
\item \textsc{Controlled not}: This is a two-bit unitary operator, where the first register is the {\it control} and the second is the {\it target}. For $x_1,x_2 \in \{\pm 1\}$, it maps $\ket{(x_1,x_2)}$ to $\ket{(x_1,-x_2)}$  if $x_1=-1$ and otherwise leaves it fixed.
\item \textsc{Clifford} operator $R_{\pi/8}$: This is a single qubit unitary operator given by the 2 $\times$ 2 unitary matrix $\begin{bmatrix} \cos \frac{\pi}{8} & -\sin\frac{\pi}{8} \\ \sin\frac{\pi}{8} & \cos\frac{\pi}{8} \end{bmatrix}$.
\item \textsc{Measurement} of the $i$-th Register: Let $M_{i,1}$ (respectively $M_{i,-1}$) be the projection operator onto the span of $\{ \ket{x} : x(i)=1 \}$ (respectively the span of $\{ \ket{x} : x(i)=-1 \}$). The measurement of the $i$-th register of a pure state $\ket{\phi}$ is a probabilistic process which returns the state $\frac{ M_{i,b}\ket{\phi} }{ \| M_{i,b}\ket{\phi}\| }$ with probability $\| M_{i,b} \ket{\phi}\|^2$ for $b\in \{-1,1\}$. The subsequent value of $b$ is said to be the {\it outcome} of the measurement.
\end{itemize}

\begin{figure}
\centering
\mbox{ 
\Qcircuit @C=1em @R=.7em {
& \gate{H}  & \qw &  &  \meter & \qw & &\targ & \qw\\
}}
\caption{Representation of the \textsc{Hadamard}, the \textsc{measurement} and \textsc{not} operators. A horizontal wire represents a register and a labeled box an operator.}
\end{figure}
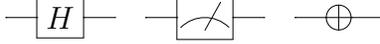

\begin{figure}
\centering
\mbox{ 
\Qcircuit @C=1em @R=.7em {
& \ctrl{1}  & \qw   \\
& \targ & \qw \\
}}
\caption{Representation of the $CNOT$ operator. The black dot represent the control register and $\oplus$ represents the target register.}
\end{figure}
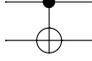

A quantum circuit $Q:\{-1, 1\}^n\rightarrow \{-1, 1\}^m$ of space $S$ consists of a set of $S$ registers, the first $n$ of which are initialized to $\ket{x}$, the input, while the rest are initialized to $\ket{1}$. It further consists of a sequence of operators chosen from $\{H, CNOT_{i,j}, R_{\pi/8}, M_i \}$ along with a description of which register they act on. The size of a circuit is the number of operators. The output of a circuit is defined to be the contents of the first $m$ registers. Since we want the output to be Boolean, we assume that the circuit measures these registers and returns the outcome. Thus, a quantum circuit is inherently probabilistic.

We now describe quantum circuits with {\it query or oracle} access. In this model, all registers are initialized to $\ket{1}$ and the input $x\in \{-1,1 \}^n$ is not written into the registers. Instead, it is compactly presented to the algorithm using a blackbox, a device which for every index $i\in [n]$, returns $x(i)\ket{i}$ when it is given $\ket{i}$ as input. More precisely, for every possible input $x\in \{-1, 1\}^n$, the oracle to $x$ is the linear operator $O_x: \mathcal{H}_{\lceil \log n\rceil} \rightarrow \mathcal{H}_{\lceil \log n\rceil}$ which maps the basis states $\ket{i}$ to $x_i\ket{i}$ whenever $i\in [n]$ and otherwise leaves it fixed. This indeed restricts to a unitary operation on pure states, as its action on the basis states is described by a diagonal $\{-1, 1\}$-matrix. This serves as the quantum analogue of a classical oracle, which is a blackbox that returns $x(i)$ on input $i\in [n].$ A {\it quantum circuit with oracle access to inputs} is a quantum circuit that is allowed to use the $O_x$ operator in addition to the usual operators, where $x$ is the input to the computation. The {\it size} of the circuit is the total number of operators from $\{H, CNOT_{i,j}, R_{\pi/8}, M_i , O_x \}$ used. We say that an algorithm is {\it efficient}, if it is described by a circuit of size at most $poly\log n$ with oracle access to inputs.

Note that it is possible to use the oracle $O_x$ to explicitly write down the input $x$ into $n$ registers, however, this requires $n$ oracle calls and $n$ registers. It is often the case that this step is unnecessary. 

\subsection{Classical \& Quantum Communication Complexity}

Let $f:\{-1,1\}^n \times \{-1,1\}^m \rightarrow \{-1,1\}$ be a partial Boolean function. Alice (respectively Bob) receives a private input $x\in \{-1,1\}^n$ (respectively $y\in \{-1,1\}^m$) and the players' goal is to compute $f(x,y)$ if $(x,y)$ is in the support of $f$, while exchanging as few bits as possible. An input $(x,y)$ is said to be a \textsc{yes} (respectively \textsc{no}) instance if $f(x,y)=-1$ (respectively if $f(x,y)=1$).

A {\it deterministic communication protocol} $D$ proceeds in rounds, and in each round, a player sends the other a message in $\{-1, 1\}$. A message sent by a player in a given round is the output bit of some fixed Boolean function of their private input and the messages they received in the previous rounds. At the end, Alice returns a bit $D(x,y)$, the output of the protocol. The protocol computes $f$ if for all $(x,y)$ in the support of $f$, we have $D(x,y)=f(x,y)$. The {\it communication cost} of the protocol is the maximum over the inputs $(x,y)$ in the support of $f$ of the number of bits exchanged. We assume that the protocol returns a bit in $\{-1,1\}$ even when run on inputs not in the support of $f$, this can be done by aborting and returning $1$ if the players realize that their inputs are not in the support of $f$. The sequence of messages is called the {\it transcript}. For every protocol of cost at most $c$ and for every possible transcript in $\{-1,1\}^c$, the set of input pairs $(x,y)\in\{-1,1\}^n\times \{-1,1\}^m$ that could have generated this transcript is a rectangle $A\times B$, where $A$ (respectively $B$) is the set of Alice's (respectively Bob's) inputs that could have generated the transcript. Thus, every deterministic protocol of cost at most $c$ induces a partition of the input space $\{-1,1\}^n\times \{-1,1\}^m$ into at most $2^c$ rectangles. 

In a {\it bounded-error randomized protocol} $C$, Alice and Bob have access to a shared unbiased coin which they can toss arbitrarily many times. Based on the outcome $r$ of the coin tosses, they run a deterministic protocol $D_r$. The protocol is said to compute $f$ with error at most $\epsilon$ if for each $(x,y)$ in the support of $f$, with at least $1-\epsilon$ probability over the coin tosses, the output of the deterministic protocol equals $f(x,y)$. The cost is the maximum cost of the deterministic protocols. The min-max principle states that for any partial function $f:\{-1,1\}^n\times \{-1,1\}^m\rightarrow \{-1,1\}$, there is a bounded error protocol of cost at most $c$ computing $f$ with error at most $\epsilon$ if and only if for all distributions $\mu$ on the support of $f$, there is a deterministic protocol $D$ of cost at most $c$ such that $\underset{(x,y)\sim \mu}{\mathbb{E}} D(x,y)f(x,y) \ge 1-2\epsilon$. We assume $\epsilon=1/3$ by default.

In the quantum communication model, Alice and Bob have infinitely many private qubits, the first of which are initialized to their respective inputs and the rest to $\ket{1}$. A {\it quantum bounded-error protocol} $Q$ consists of several rounds and in each round, a player applies a unitary or a measurement operator to qubits they own and then sends a qubit to the other player. The sequence of operators and the qubits they act on is fixed beforehand. At the end of the protocol, Alice returns a bit $Q(x,y)$, the output of the protocol. The protocol is said to compute $f$ if for every $(x,y)$ in the support of $f$, with probability at least $2/3$, the output $Q(x,y)$ equals $f(x,y)$. The cost of the protocol is the maximum number of qubits exchanged.

In communication with {\it entanglement}, the players are given the additional resource of entangled qubits. These are $2m$ registers for some $m\in \mathbb{N}$, the first half of which belong to Alice and the second half to Bob. The registers are initialized to the state $\frac{1}{\sqrt{2^m}} \sum_{i=1}^{2^m} \ket{i}^A\ket{i}^B$. Here, the superscript indicates to which player the register belongs. The assumption on the initial entangled state is natural as this state is obtained by tensoring $m$ independent copies of the Bell state $\frac{1}{\sqrt{2}}\big(\ket{0}^A\ket{0}^B + \ket{1}^A\ket{1}^B\big)$. In other words, it is as if Alice and Bob had $m$ independent copies of the Bell state. We say that in a protocol using $\frac{1}{\sqrt{2^m}} \sum_{i=1}^{2^m} \ket{i}^A\ket{i}^B$ as the initial state, Alice and Bob share $m$ bits of entanglement.

In the {\it simultaneous} model of communication, Alice and Bob are not allowed to exchange registers with each other. Instead, they are allowed one round of communication with a referee Charlie, to whom they can only send qubits. The referee then performs some quantum operation on the qubits he receives and returns a bit as the output. As before, a bounded-error simultaneous protocol computes $f$ if for all $(x,y)$ in the support of $f$, with probability at least 2/3, the referee's output agrees with $f(x,y)$. The cost is the total number of qubits that Alice and Bob send the referee. 

Note that in each of the above models of communication, every function $f:\{-1,1\}^n\times \{-1,1\}^m  \rightarrow \{-1,1\}$ has communication cost at most $n+m$, since the players may simply reveal their entire inputs. Hence, a small cost protocol is one in which the communication cost is at most $poly\log(n+m)$.

A communication protocol is said to be {\it efficient} if it can be implemented by a small size circuit with oracle access $O_x,O_y$ to the inputs $x,y$. Protocols with small communication cost are not necessarily efficient, as they may require computationally intensive processing on the messages, or they may require the players to make several probes into their inputs. 

\subsection{The Forrelation Distribution $\mathcal{D}$}

Let $x\sim \mathcal{D}$ refer to a random variable $x$ distributed according to the probability distribution $\mathcal{D}$. We use $\p_\mathcal{D}$ to refer to the probability measure associated with $\mathcal{D}$ and $\p_{x\sim \mathcal{D}}(E(x))$ to refer to the probability of event $E(x)$ when $x\sim \mathcal{D}$. For an event $E(x)$, we will denote by $\mathcal{D}|E(x)$ (respectively $\mathcal{D}|\neg E(x)$), the distribution $\mathcal{D}$ conditioned on the event $E(x)$ occurring (respectively, the event $E(x)$ not occurring). Let $\epsilon\ge0$ be a parameter, $f(x):\mathbb{R}^{n}\rightarrow \mathbb{R}$ a function and $\mathcal{D}$ a distribution on $\mathbb{R}^{n}$. We say that $\mathcal{D}$ fools $f$ with error $\epsilon$ if $\left|\underset{x\sim U_{n}}{\E}[f(x)]-\underset{x\sim \mathcal{D}}{\E}[f(x)]\right|\le \epsilon$.

Let $\mathcal{N}(\mu,\sigma^2)$ denote a Gaussian distribution of mean $\mu \in \mathbb{R}$ and variance $\sigma^2\in \mathbb{R}_{\ge 0}$. We will repeatedly use the following standard facts about Gaussians.
\begin{itemize}
\item Gaussian Concentration inequality: For $X\sim \mathcal{N}(\mu,\sigma^2)$, we have $ \p[|X-\mu|\ge a]\le e^{-\frac{a^2}{2\sigma^2}}$.
\item The sum $\sum_i X_i$ of independent Gaussians $X_i \sim \mathcal{N}(\mu_i,\sigma_i^2)$ is distributed according to  $ \mathcal{N}(\sum_i \mu_i, \sum_i \sigma_i^2)$.
\end{itemize} 

We will also use Chebyshev's inequality and Chernoff's bound. Chebyshev's inequality~\cite{chebyshev} implies that for a set of $n$ pair-wise independent random variables $X_i$ with mean $\mu_i$ and variance $\sigma_i^2$ , we have $\p[ |\sum_{i=1}^n (X_i-\mu_i) | \ge a] \le \frac{\sum_{i=1}^n\sigma_i^2}{a^2}$. Chernoff's bound~\cite{bernstein,cos521} implies that for $n$ independent identical random variables $X_i$ in $[-1,1]$ whose sum is of mean $\mu$ and variance $\sigma^2$, we have $\p \left[\left|  \sum_{i=1}^n X_i - \mu \right| \ge t\sigma \right]\le 2\exp(-t^2/4)$ whenever $t\le \frac{\sigma}{2}.$ 

Let $x=(x_1,x_2)$ for $x_1,x_2 \in \{-1, 1\}^{N}$. We define the forrelation of $x$ as the correlation between the second half $x_2$ and the Hadamard transform of the first half $x_1$. 
\[ forr(x) := \left\langle   \frac{1}{\sqrt{N}} H_N(x_1)\Bigg| \frac{1}{\sqrt{N}} x_2\right\rangle \] 

We state the definition of the forrelation distribution, as defined in~\cite{raztal}. Fix a parameter $\epsilon=\frac{1}{50 \ln N}$. We first define an auxilliary Gaussian distribution $\mathcal{G}$ generated by sampling the first half uniformly at random and letting the second half be the Hadamard transform of the first half. More precisely,
\begin{enumerate}
\item Sample $x_1,\ldots,x_N\sim \mathcal{N}(0,\epsilon)$.
\item Let $y=H_N x$.
\item Output $(x,y)$. 
\end{enumerate}
This is a Gaussian random variable in $2N$ dimensions of mean $0$ and covariance matrix given by
\[\epsilon \begin{bmatrix} \mathbb{I}_N & H_N \\ H_N & \mathbb{I}_N \end{bmatrix} \]

Let $trnc:\mathbb{R} \rightarrow [-1,1]$ be the truncation function which on input $\alpha>1$, returns 1, $\alpha<-1$ returns $-1$ and otherwise returns $\alpha$. This naturally defines a function $trnc:\mathbb{R}^{2N}\rightarrow [-1,1]^{2N}$ obtained by truncating each coordinate. We now define a distribution $\mathcal{D}$ over $\{-1, 1\}^{2N}$ generated from $ \mathcal{G}$ by truncating the sample and then independently sampling each coordinate as follows.
\begin{enumerate}
\item Sample $z\in \mathcal{G}$.
\item For each coordinate $i\in [2N]$ independently, let $z'_i=1$ with probability $\frac{1+trnc(z_i)}{2}$ and $-1$ with probability $\frac{1-trnc(z_i)}{2}$. 
\item Output $z'$.
\end{enumerate}
\sloppy We refer to the distribution $\mathcal{D}$ as the {\it forrelation} distribution. We state Claim 6.3 from~\cite{raztal} which implies that a vector drawn from this distribution has large forrelation on expectation. The proof is omitted.
\begin{lemma}\label{raztallemma} Let $\mathcal{D}$ be the forrelation distribution as defined previously. Then,
\[\E_{z\sim \mathcal{D}} [forr(z)]\ge \frac{\epsilon}{2} \]
\end{lemma}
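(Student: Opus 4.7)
The plan is to use the conditional independence of the Bernoulli samples $z'_i$ given the Gaussian vector $z$ to reduce the expectation over $\mathcal{D}$ to a computation about $\mathcal{G}$, then bound the loss from truncation. Given $z$, the bits $z'_i$ are mutually independent with $\E[z'_i \mid z] = trnc(z_i)$. Expanding
\[ forr(z') = \frac{1}{N}\sum_{i,j \in [N]} (H_N)_{ij}\, z'_j\, z'_{N+i}, \]
the indices $j$ and $N+i$ are always distinct, so by conditional independence,
\[ \E_{z' \sim \mathcal{D}}[forr(z')] = \frac{1}{N}\,\E_{z \sim \mathcal{G}}\bigl[\langle H_N\, trnc(z_1)\,|\,trnc(z_2)\rangle\bigr]. \]

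Next I would compute the corresponding quantity without truncation. Since $\mathcal{G}$ is generated by setting $z_2 = H_N z_1$ with $z_1$ having i.i.d.\ $\mathcal{N}(0,\epsilon)$ coordinates, and since $H_N$ is orthogonal,
\[ \E_{z\sim \mathcal{G}}\bigl[\langle H_N z_1 \,|\, z_2\rangle\bigr] \;=\; \E\bigl[\|H_N z_1\|^2\bigr] \;=\; \E\bigl[\|z_1\|^2\bigr] \;=\; N\epsilon, \]
so after dividing by $N$ the untruncated expectation equals $\epsilon$ exactly. It therefore suffices to show that replacing $z_1, z_2$ by $trnc(z_1), trnc(z_2)$ changes this by at most $\epsilon/2$.

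For this, let $T(a) := a - trnc(a)$, which vanishes on $[-1,1]$. Bilinearity gives
\[ \langle H_N z_1 \,|\, z_2\rangle - \langle H_N\, trnc(z_1)\,|\,trnc(z_2)\rangle = \langle H_N\, T(z_1)\,|\,z_2\rangle + \langle H_N\, trnc(z_1)\,|\,T(z_2)\rangle. \]
Applying Cauchy--Schwarz pointwise, then Jensen in expectation, and using orthogonality of $H_N$, the absolute expectation of the right-hand side is at most
\[ \sqrt{\E[\|T(z_1)\|^2]\, \E[\|z_2\|^2]} \;+\; \sqrt{\E[\|trnc(z_1)\|^2]\, \E[\|T(z_2)\|^2]}. \]
Here $\E[\|z_2\|^2] = N\epsilon$ and $\E[\|trnc(z_1)\|^2] \le \E[\|z_1\|^2] = N\epsilon$, while each coordinate of $z$ is $\mathcal{N}(0, \epsilon)$ with $\epsilon = 1/(50 \ln N)$. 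The Gaussian concentration inequality then gives $\p[|z_i| \ge 1] \le e^{-1/(2\epsilon)} = N^{-25}$, and a short Gaussian tail calculation yields $\E[T(z_i)^2] \le N^{-\Omega(\ln N)}$, so $\E[\|T(z_j)\|^2]$ is superpolynomially small. Dividing by $N$ produces a truncation loss dwarfed by $\epsilon/2$, giving $\E_{z' \sim \mathcal{D}}[forr(z')] \ge \epsilon/2$.

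The only step requiring any care is the Gaussian tail computation at the end, and since the truncation threshold $1$ sits $\Theta(\sqrt{\ln N})$ standard deviations out, the relevant Gaussian mass is already $N^{-\Omega(1)}$ and the error term is easily absorbed into $\epsilon = \Theta(1/\ln N)$. The conceptual core is the conditional-independence identity at the start, which converts the discrete expectation under $\mathcal{D}$ cleanly into a Gaussian inner product; everything else is a direct calculation.
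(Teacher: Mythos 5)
Your proof is correct. Note that the paper itself omits the proof of this lemma (it is quoted as Claim~6.3 of~\cite{raztal}), so there is no in-paper argument to compare against; your route is essentially the standard one. Your opening step---using conditional independence of the rounded bits to get $\E_{z'\sim\mathcal{D}}[forr(z')]=\E_{z\sim\mathcal{G}}[forr(trnc(z))]$---is exactly the paper's Claim~\ref{claim1} applied to the multilinear function $forr$, and the exact computation $\E_{z\sim\mathcal{G}}[forr(z)]=\epsilon$ via orthogonality of $H_N$ is right. Your truncation bound (splitting off $T(z)=z-trnc(z)$, Cauchy--Schwarz, Gaussian tails) is a clean, self-contained alternative to the $\prod_i\max(1,|z_i|)$ bound that \cite{raztal} uses (and that this paper packages as Claim~\ref{claim3}). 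One small slip: $\E[T(z_i)^2]$ is not $N^{-\Omega(\ln N)}$; with $\epsilon=1/(50\ln N)$ the tail mass $e^{-1/(2\epsilon)}$ is $N^{-25}$, i.e., only polynomially small (with a large exponent). This is harmless---$N^{-\Omega(1)}$ with any constant exceeding $1$ already makes the truncation loss, after dividing by $N$, negligible compared to $\epsilon/2=\Theta(1/\ln N)$---and indeed you state the correct $N^{-\Omega(1)}$ form elsewhere, but the exponent should be fixed.
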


\subsection{Multilinear Functions on $\mathcal{D}$}

Given a function $f:\{-1,1\}^n\rightarrow \mathbb{R}$, there is a unique multilinear polynomial $\tilde{f}:\mathbb{R}^n\rightarrow \mathbb{R}$ which agrees with $f$ on $\{-1, 1\}^n$. This polynomial is called the multilinear extension of $f$. The multilinear extension of any character function $\chi_S(x)$ is precisely $\prod_{i\in S} x_i$. The multilinear extension $\tilde{f}$ of $f$ satisfies $\tilde{f}(x)=\sum_{S\subseteq [n]} \widehat{f}(S) \prod_{i\in S}x_i$ for all $x\in \mathbb{R}^n$. We sometimes identify $f$ with its multilinear extension. The main content of this section is that bounded multilinear functions have similar expectations under $\mathcal{G}$ and under $\mathcal{D}$.  

\begin{claim} \label{claim1}   Let $F:\mathbb{R}^{2N}\rightarrow \mathbb{R}$ be any multinear function $F=\sum_S \widehat{F}(S)\chi_S$. Then,
\[ \underset{z'\sim \mathcal{D}}{\E}[F(z')]= \underset{z\sim \mathcal{G}}{\E}[F(trnc(z))]  \]
\end{claim}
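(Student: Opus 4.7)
The plan is to use the tower property of expectation, first conditioning on the Gaussian sample $z \sim \mathcal{G}$ and then averaging over $z$. The key observation is that the distribution $\mathcal{D}$ is defined precisely so that, conditional on $z$, the coordinates $z'_i$ are \emph{independent} $\pm 1$ random variables whose conditional means satisfy $\E[z'_i \mid z] = trnc(z_i)$. This is immediate from the sampling rule: with probability $(1+trnc(z_i))/2$ we set $z'_i = 1$, and otherwise $z'_i = -1$, so the conditional mean equals $trnc(z_i)$.

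First I would expand $F$ in its multilinear form $F(z') = \sum_S \widehat{F}(S) \prod_{i \in S} z'_i$ and compute $\E_{z'\sim \mathcal{D}}[F(z') \mid z]$ by pushing the expectation through the (finite) sum:
\begin{align*}
\E[F(z') \mid z] \;=\; \sum_S \widehat{F}(S)\, \E\!\left[\prod_{i \in S} z'_i \,\Big|\, z\right] \;=\; \sum_S \widehat{F}(S)\, \prod_{i \in S} \E[z'_i \mid z] \;=\; \sum_S \widehat{F}(S) \prod_{i \in S} trnc(z_i),
\end{align*}
where the middle equality uses the conditional independence of the $z'_i$'s given $z$. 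Since $F$ is multilinear by hypothesis, the final expression is exactly $F(trnc(z))$ evaluated at the real vector $trnc(z) \in [-1,1]^{2N}$.

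Finally I would apply the tower property to conclude
\[
\E_{z' \sim \mathcal{D}}[F(z')] \;=\; \E_{z \sim \mathcal{G}}\!\left[\E[F(z') \mid z]\right] \;=\; \E_{z \sim \mathcal{G}}[F(trnc(z))].
\]

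There is no real obstacle here; the only point that requires care is making sure we are allowed to pull the product outside the conditional expectation, which is justified by the independence of the coordinates in the rounding step of the construction of $\mathcal{D}$, and that the equality $\sum_S \widehat{F}(S)\prod_{i\in S} trnc(z_i) = F(trnc(z))$ uses the fact that $F$ agrees with its multilinear extension on all of $\mathbb{R}^{2N}$, not just on $\{-1,1\}^{2N}$. Both are straightforward from the definitions laid out just before the claim.
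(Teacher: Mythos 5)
Your proposal is correct and is essentially identical to the paper's proof: condition on $z\sim\mathcal{G}$, use linearity plus the conditional independence of the coordinates to get $\E[\chi_S(z')\mid z]=\prod_{i\in S}trnc(z_i)$, and conclude by multilinearity and the tower property. The paper's version is just slightly terser (it leaves the tower-property step and the independence justification implicit).
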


\begin{proof}[Proof of Claim \ref{claim1}] For any fixed $z$, recall that the sampling process for $\mathcal{D}$ involves independently setting $z_i'$ to $1$ with probability $\frac{1+trnc(z)_i}{2}$ and to $-1$ with the remaining probability. Because of this and linearity of expectation, we have
\begin{align*}
\begin{split}
\E\left[F(z') \mid z\right]&=\E\Big[\sum_S \widehat{F}(S)\chi_S(z') \Big|  z\Big]=\sum_S \widehat{F}(S)\E[\chi_S(z')\mid z]\\
&=\sum_S \widehat{F}(S) \chi_S(\E[z'|z]) \\
&=\sum_S \widehat{F}(S)\chi_S(trnc(z))=F(trnc(z))\end{split} \end{align*}
\end{proof}

This implies that $\E_{z'\sim \mathcal{D}}[F(z')]$ is exactly $\E_{z\sim \mathcal{G}}[F(trnc(z))]$. The following claim states that $\E_{z\sim \mathcal{G}}[F(trnc(z))]$ is pretty close to $\E_{z\sim \mathcal{G}}[F(z)]$ for a bounded multilinear function $F$. Its proof is identical to that in \cite{raztal}, so we omit it. The underlying idea is that $\epsilon$ is small, so the random variable $z\sim \mathcal{G}$ has an exponentially decaying norm, furthermore, bounded multilinear functions $F$ on $\{-1, 1\}^{2N}$ cannot grow faster than exponentially in the norm of the argument.
\begin{claim}\label{claim3} Let $F(z)$ be any multilinear polynomial mapping $\{-1, 1\}^{2N}$ to $[-1,1]$. Let $z_0\in [-1/2,1/2]^{2N}$, $p\le \frac{1}{2}$ and $N>1$. Then,
\[ \E_{z\sim \mathcal{G}} \left[\left| F(trnc(z_0+pz))-F(z_0+pz) \right|\right] \le \frac{8}{N^5} \]
\end{claim}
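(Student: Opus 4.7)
The plan is to exploit that the truncation is a no-op whenever $z_0+pz\in [-1,1]^{2N}$, and that this ``bad event'' is extremely rare for the given Gaussian distribution. Define $E := \{\, z_0+pz\notin [-1,1]^{2N}\,\}$ and write $y := z_0+pz$. On $E^c$ the integrand vanishes, so we restrict to $E$. By Claim~5.1 of \cite{raztal} (the first commented-out claim), for any multilinear $F$ with $|F|\le 1$ on the hypercube, $|F(w)| \le \prod_i \max(1,|w_i|)$ for every $w\in\mathbb{R}^{2N}$; combined with $|F(trnc(w))|\le 1$, this yields
\[
|F(trnc(y)) - F(y)| \;\le\; 2\prod_{i=1}^{2N} \max(1,|y_i|),
\]
and it suffices to prove $\E[\mathbbm{1}_E \prod_i \max(1,|y_i|)] \le 4/N^{5}$.

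Next, I bound the marginal tail. Each $y_i$ is $\mathcal{N}(z_{0,i},\,p^2\epsilon)$, and since $|z_{0,i}|\le 1/2$, the event $|y_i|>1$ requires a deviation of at least $1/2$ from the mean. Plugging $p^2\le 1/4$ and $\epsilon=1/(50\ln N)$ into Gaussian concentration gives
\[
\p[\,|y_i|>1\,]\;\le\; 2\exp\!\left(-\frac{1}{8 p^2 \epsilon}\right)\;\le\; 2N^{-25}.
\]
Applying the union bound $\mathbbm{1}_E \le \sum_{j=1}^{2N}\mathbbm{1}_{|y_j|>1}$ reduces the task to estimating $\E\!\left[\mathbbm{1}_{|y_j|>1}\prod_i \max(1,|y_i|)\right]$ for each $j$ and summing.

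For this, split the product into the first-half factor $P_1=\prod_{i\le N}\max(1,|y_i|)$ and second-half factor $P_2=\prod_{i>N}\max(1,|y_i|)$. Within each half the coordinates are mutually independent: $\{y_i\}_{i\le N}$ because the underlying $x_i$ are i.i.d.\ Gaussians, and $\{y_i\}_{i>N}$ because $H_N$ is orthogonal, so $H_N x$ is also i.i.d.\ Gaussian. Two applications of Cauchy--Schwarz---first to pull out the violating factor $|y_j|\,\mathbbm{1}_{|y_j|>1}$ (whose squared expectation is $\le \sqrt{\E[y_j^4]\,\p[|y_j|>1]}=O(N^{-25/2})$), then to decouple $P_1$ from $P_2$---reduce everything to bounding per-coordinate moments of the form $\E[\max(1,|y_i|)^4]$. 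Each such moment equals $1+O(N^{-25/2})$ by the same tail estimate, so the product of $N$ such moments stays within $1+o(1)$. Combining, each summand is $O(N^{-25/4})$, and multiplying by the $2N$ terms yields $O(N^{-21/4})$, comfortably below $8/N^5$ for large $N$; very small $N$ can be absorbed into the constant.

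The main obstacle, as I see it, is the cross-half correlation: the vectors $(y_1,\ldots,y_N)$ and $(y_{N+1},\ldots,y_{2N})$ are deterministic functions of the same Gaussian $x$, so $\E[P_1 P_2]$ does not factor and one cannot simply multiply the halves. Cauchy--Schwarz across halves handles this, but then one must pay by bounding fourth rather than second moments; the key quantitative check is that the resulting product of $N$ per-coordinate moments does not blow up exponentially in $N$. This is exactly where the smallness of $\epsilon=1/(50\ln N)$ is essential, driving the per-coordinate deviation from $1$ down to a polynomially small $N^{-\Omega(1)}$, which is what keeps the $N$-fold product tame.
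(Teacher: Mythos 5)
Your proposal is correct and follows essentially the same route as the paper's (omitted) proof, which defers to Claims 5.1 and 5.2 of Raz--Tal: bound $|F(trnc(y))-F(y)|$ by $2\prod_i\max(1,|y_i|)$ on the event that truncation acts, then control the expectation of that product on the bad event using the Gaussian tail $e^{-1/(8p^2\epsilon)}\le N^{-25}$ and Cauchy--Schwarz to decouple the two correlated halves. Your exponent bookkeeping checks out ($N\cdot N^{-25/4}=N^{-21/4}<N^{-5}$ with room to spare, and the constants are comfortable for all $N\ge 2$), though the closing remark about absorbing small $N$ "into the constant" should really be a direct numerical check since the target constant $8$ is fixed.
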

We remark that the bound in \cite{raztal} is $\frac{8}{N^2}$. The improved bound of $\frac{8}{N^5}$ in Claim \ref{claim3} follows from our choice of $\epsilon=\frac{1}{50\ln N}$, as opposed to $\epsilon=\frac{1}{24\ln N}$ as in \cite{raztal}.

\subsection{Moments of $\mathcal{G}$}
In this section we state some facts about the moments of the forrelation distribution that will be useful later. We use the following notation to refer to the moments of $\mathcal{G}$.
\[\widehat{\mathcal{G}}(S,T):= \underset{(x,y)\sim \mathcal{G}}{\E} \left[ \prod_{i\in S}x_i \prod_{j\in T}y_j \right] \]
The following claim and its proof are analogous to Claim 4.1 in ~\cite{raztal}.
\begin{claim} \label{claim4} Let $S,T\subseteq [N]$ and $i,j\in [N]$. Let $k_1=|S|,k_2=|T|$. Then,
\begin{enumerate}
\item $\widehat{\mathcal{G}}(\{i\},\{j\})=\epsilon N^{-1/2}(-1)^{\left<i,j\right>}$.
\item $\widehat{\mathcal{G}}(S,T)=0$ if $k_1\neq k_2$.
\item $\left|\widehat{\mathcal{G}}(S,T)\right|\le \epsilon ^{k} k!N^{-k/2}$ if $k=k_1=k_2$.
\item $\left|\widehat{\mathcal{G}}(S,T)\right|\le \epsilon^{|S|}$ for all $S,T$.
\end{enumerate} 
\end{claim}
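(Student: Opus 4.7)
The plan is to evaluate $\widehat{\mathcal{G}}(S,T)=\E[\prod_{i\in S}x_i\prod_{j\in T}y_j]$ via Isserlis' theorem (Wick's rule) for zero-mean Gaussian vectors: the moment equals the sum over perfect matchings of the $(|S|+|T|)$ variables, weighted by products of pairwise covariances. Reading off the covariance matrix $\epsilon\begin{pmatrix}\mathbb{I}_N & H_N\\ H_N & \mathbb{I}_N\end{pmatrix}$, we have $\E[x_ix_{i'}]=\epsilon\delta_{ii'}$, $\E[y_jy_{j'}]=\epsilon\delta_{jj'}$, and $\E[x_iy_j]=\epsilon N^{-1/2}(-1)^{\langle i,j\rangle}$. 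Since the indices in $S$ (resp.\ $T$) are distinct, any pair of the form $(x_i,x_{i'})$ or $(y_j,y_{j'})$ inside a Wick matching involves distinct subscripts and hence contributes $0$. Therefore the only surviving matchings pair each $x_i$ with a distinct $y_j$, i.e., they are in bijection with the bijections $\sigma:S\to T$.

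Items (1)--(3) then fall out. Item (1) is the single-pair covariance. For item (2), no bijection $\sigma:S\to T$ exists when $|S|\ne|T|$, so $\widehat{\mathcal{G}}(S,T)=0$. For item (3), when $|S|=|T|=k$ the expansion gives
\[\widehat{\mathcal{G}}(S,T)=\sum_{\sigma:S\to T}\prod_{i\in S}\E[x_iy_{\sigma(i)}]=\epsilon^kN^{-k/2}\sum_{\sigma}(-1)^{\sum_i\langle i,\sigma(i)\rangle},\]
and the triangle inequality yields $|\widehat{\mathcal{G}}(S,T)|\le \epsilon^kk!N^{-k/2}$ since there are $k!$ bijections.

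For item (4), the bound coming from (3) is insufficient once $k\gtrsim \log N$, because $k!N^{-k/2}$ blows up. I would instead apply Cauchy--Schwarz directly to the expectation:
\[|\widehat{\mathcal{G}}(S,T)|^2=\E[\chi_S(x)\chi_T(y)]^2 \le \E[\chi_S(x)^2]\cdot \E[\chi_T(y)^2].\]
The key input is that both marginals of $\mathcal{G}$ are $\mathcal{N}(0,\epsilon\mathbb{I}_N)$ with independent coordinates: for $x$ this is the top-left block of the covariance; for $y$ one uses $H_NH_N^T=\mathbb{I}_N$, so the bottom-right block is also $\epsilon\mathbb{I}_N$, and uncorrelated jointly Gaussian coordinates are independent. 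Hence $\E[\chi_S(x)^2]=\prod_{i\in S}\E[x_i^2]=\epsilon^{|S|}$, and similarly $\E[\chi_T(y)^2]=\epsilon^{|T|}$, giving $|\widehat{\mathcal{G}}(S,T)|\le \epsilon^{(|S|+|T|)/2}$. When $|S|=|T|$ this equals $\epsilon^{|S|}$; when $|S|\ne|T|$, item (2) already forces the moment to $0$. Either way, the bound $\epsilon^{|S|}$ in item (4) is established.

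The main hurdle is recognizing that item (4) should be attacked with a Cauchy--Schwarz shortcut rather than through a direct permanent bound (which would require the delicate estimate $|\mathrm{perm}(H_N[S,T])|\le 1$ for arbitrary submatrices of the Hadamard matrix); otherwise every step is a routine Gaussian manipulation, exactly parallel to Claim~4.1 of~\cite{raztal}.
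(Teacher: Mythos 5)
The paper does not write out a proof of this claim; it simply points to Claim~4.1 of Raz--Tal~\cite{raztal}, whose argument is exactly the Isserlis/Wick expansion you use. Your treatment of items (1)--(3) matches that argument: expand the moment as a sum over pairings of the variables $\{x_i\}_{i\in S}\cup\{y_j\}_{j\in T}$, note that any pairing matching two $x$'s (or two $y$'s) vanishes because the elements of $S$ (resp.\ $T$) are distinct while the diagonal covariance blocks are $\epsilon\mathbb{I}_N$, and conclude that the surviving pairings are bijections $S\to T$, of which there are $k!$, each contributing a factor of modulus $(\epsilon N^{-1/2})^k$.

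For item (4), your Cauchy--Schwarz shortcut is correct and is genuinely the cleaner route. Unfolding the Wick expansion for item (4) would leave one needing $\left|\mathrm{perm}(M)\right|\le 1$ for a $k\times k$ submatrix $M$ of the normalized Hadamard matrix --- a true but nontrivial fact about submatrices of orthogonal matrices (itself proved by a Cauchy--Schwarz argument in the $k$-fold symmetric tensor power). Applying Cauchy--Schwarz directly to $\E[\chi_S(x)\chi_T(y)]$ avoids this entirely: both marginals of $\mathcal{G}$ are $\mathcal{N}(0,\epsilon\mathbb{I}_N)$ with independent coordinates (for $y$ one uses $H_NH_N^{\top}=\mathbb{I}_N$), so $\E[\chi_S(x)^2]=\epsilon^{|S|}$ and $\E[\chi_T(y)^2]=\epsilon^{|T|}$, giving $\left|\widehat{\mathcal{G}}(S,T)\right|\le\epsilon^{(|S|+|T|)/2}$, which combined with item (2) yields $\epsilon^{|S|}$. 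Your proof is complete and correct; the only divergence from the cited source is this self-contained handling of item (4), which is an improvement in exposition.
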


\section{The Forrelation Communication Problem}

In this section we formally state the main theorems of this paper. Their proofs follow in the successive sections.

Let $\epsilon=\frac{1}{50 \ln N}$ be the parameter as before, defining the forrelation distribution.

\begin{theorem} \label{theorem1}  Consider the following distribution. A string $z\in \{-1, 1\}^{2N}$ is drawn from the forrelation distribution, $x\sim U_{2N}$ is drawn uniformly and $y:=x\cdot z$. Alice gets $x$ and Bob gets $y$. Given any deterministic communication protocol $C:\{-1, 1\}^{2N}\times \{-1, 1\}^{2N}\rightarrow \{-1, 1\}$ of cost $c\ge 1$, its expectation when the inputs are drawn from this distribution is close to when the inputs are drawn from the uniform distribution. That is,  \[ \left|\underset{\substack{x\sim U_{2N}\\z\sim \mathcal{D}}}{\E}\left[C(x,x\cdot z)\right]- \underset{x,y\sim U_{2N}}{\E}\left[C(x,y)\right] \right|\le  O\left( \frac{c^2}{N^{1/2}}\right) \] In other words, no deterministic protocol of cost $o(N^{1/4})$ has considerable advantage in distinguishing the above distribution from the uniform distribution. \end{theorem}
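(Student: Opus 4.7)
The plan is to bound the advantage of an arbitrary deterministic protocol of cost $c$ by decomposing it across its rectangle partition, estimating each per-rectangle contribution through a Fourier-mass bound, and aggregating via a concavity argument. A deterministic protocol $C$ of cost $c$ partitions $\{-1,1\}^{2N}\times\{-1,1\}^{2N}$ into at most $2^c$ rectangles $R_i=A_i\times B_i$ on which $C$ is constant. The triangle inequality bounds the advantage by $\sum_i |\mu_{\mathcal{V}}(R_i)-\mu_{\mathcal{U}}(R_i)|$, and the convolution identity from the overview rewrites each rectangle measure as $\mu_{\mathcal{S}}(R_i)=\E_{z\sim\mathcal{S}}[(\mathbbm{1}_{A_i}*\mathbbm{1}_{B_i})(z)]$ for $\mathcal{S}\in\{\mathcal{D},U_{2N}\}$. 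So the task reduces to showing that $\mathcal{D}$ fools each convolution $\mathbbm{1}_{A_i}*\mathbbm{1}_{B_i}$.

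Next, I would invoke the Fourier-analytic framework of \cite{raztal}, in the form outlined in the overview, to bound each per-rectangle advantage by $O(1/\sqrt{N})\cdot L_2(\mathbbm{1}_{A_i}*\mathbbm{1}_{B_i})$. The key ingredients are Claims \ref{claim1}, \ref{claim3}, and the moment bounds of Claim \ref{claim4}: the moments $\widehat{\mathcal{G}}(S,T)$ vanish unless $|S|=|T|$ and decay sharply with level, so higher-order contributions are controlled uniformly via the level-$k$ Fourier inequalities applied to the class of convolutions of rectangle indicators (which is closed under coordinate restrictions of $z$). Using $\widehat{f*g}(S)=\widehat{f}(S)\widehat{g}(S)$ and Cauchy-Schwarz,
\begin{equation*}
L_2(\mathbbm{1}_{A_i}*\mathbbm{1}_{B_i}) \le \sqrt{\textstyle\sum_{|S|=2}\widehat{\mathbbm{1}_{A_i}}(S)^2}\cdot\sqrt{\textstyle\sum_{|S|=2}\widehat{\mathbbm{1}_{B_i}}(S)^2}.
\end{equation*}
Writing $\alpha_i:=|A_i|/2^{2N}$ and $\beta_i:=|B_i|/2^{2N}$, the level-$k$ Fourier inequality at $k=2$ bounds each factor by $O(\alpha_i \ln(1/\alpha_i))$ and $O(\beta_i\ln(1/\beta_i))$ respectively. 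Thus the total advantage is at most $O(1/\sqrt{N})\cdot \sum_i \alpha_i\beta_i\ln(1/\alpha_i)\ln(1/\beta_i)$.

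The main obstacle is bounding this final sum by $O(c^2)$ under the constraints $\sum_i \alpha_i\beta_i=1$ (the rectangles partition the input space) and at most $2^c$ nonzero terms. Setting $p_i:=\alpha_i\beta_i$, AM--GM gives $\ln(1/\alpha_i)\ln(1/\beta_i)\le \tfrac14\ln^2(1/p_i)$, reducing the sum to $\tfrac14\sum_i p_i\ln^2(1/p_i)$ with $\sum_i p_i=1$ on at most $2^c$ nonzero terms. Since $p\mapsto p\ln^2(1/p)$ is concave on $(0,1/e)$ and bounded on $[1/e,1]$, Jensen's inequality on the concave region (combined with a crude bound on the boundary portion) together with the $2^c$-term constraint yields $\sum_i p_i\ln^2(1/p_i)=O(c^2)$, attained in the balanced case $p_i=2^{-c}$. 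Combining with the $O(1/\sqrt{N})$ factor from the Fourier step gives total advantage $O(c^2/\sqrt{N})$, as required. I expect the Raz--Tal-style reduction to be the most delicate piece: it requires verifying closure of the convolution family under restrictions of $z$, and bounding the contributions of $\mathcal{G}$'s moments at levels $\ge 4$ uniformly across all such restrictions, which is where the family-wide level-$k$ inequalities must be deployed carefully.
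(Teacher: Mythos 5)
Your plan matches the paper's overview at the level of ingredients (rectangle decomposition, convolution identity, Cauchy--Schwarz with the level-$2$ inequality, and the concavity/Jensen step, which is essentially Claim~\ref{weightbound}), but the order in which you apply the triangle inequality creates a genuine gap. You bound the advantage by $\sum_i\left|\mu_{\mathcal{V}}(R_i)-\mu_{\mathcal{U}}(R_i)\right|$ and then invoke the Raz--Tal framework separately for each convolution $\mathbbm{1}_{A_i}*\mathbbm{1}_{B_i}$. This fails for two reasons. First, the framework does not bound the distinguishing advantage of a function $f$ by $O(1/\sqrt{N})\cdot L_2(f)$; it bounds it by $O(1/\sqrt{N})$ times the maximum level-$2$ mass over the restriction-closed family containing $f$, because the underlying random-walk argument repeatedly restricts the function. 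A restriction in $z$ of a single rectangle-convolution is an average over $u$ of sub-rectangle convolutions whose relative measures can be near $1/e$ even when $\mu(A_i),\mu(B_i)$ are tiny, so the relevant maximum is $\Theta(1)$ per rectangle and the per-rectangle advantage bound degrades to $O(1/\sqrt{N})$, giving $O(2^c/\sqrt{N})$ in total. Second, and independently, each invocation of the framework carries additive errors unrelated to the level-$2$ mass: the level-$\ge 4$ moment contributions (bounded via $L_{2k}\le\binom{2N}{2k}$), the truncation error of Claim~\ref{claim3}, and the probability that the random walk escapes $[-1/2,1/2]^{2N}$, each of order $N^{-5}$ per step over $\Theta(N^4)$ steps, i.e.\ $\Theta(1/N)$ per function. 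Summed over $2^c$ rectangles this is $2^c/N$, which swamps $c^2/\sqrt{N}$ for $c\gg\log N$ --- exactly the regime the theorem must cover.

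The fix, which is the paper's route, is to run the random-walk argument \emph{once}, on the single $[-1,1]$-bounded function $H(z)=\E_{x\sim U_{2N}}[C(x,x\cdot z)]=\sum_i C(R_i)\,(\mathbbm{1}_{A_i}*\mathbbm{1}_{B_i})(z)$, keeping the signs. The rectangle decomposition, Cauchy--Schwarz, level-$2$ inequality and concavity are used only to prove $L_2(H)\le O(c^2)$ (Claim~\ref{weightbound}); the triangle inequality over rectangles is applied only to the second-level Fourier coefficients of $H$, where the per-rectangle bounds do add up to $O(c^2)$. Closure under restrictions is then checked for $H$ rather than for individual convolutions: a restriction of $H$ is an average over $u$ of functions $\E_x[(C\circ\rho^u)(x,x\cdot z)]$ arising from restricted protocols of cost at most $c$ (Lemma~\ref{mainlemma2}), so the $O(c^2)$ level-$2$ bound persists under restriction, and the additive $O(1/N)$ error is paid only once. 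Your Cauchy--Schwarz and Jensen computations survive unchanged inside that claim; it is the outer reduction that must be reorganized.
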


\begin{definition} [The Forrelation Problem] 
Alice is given $x\in \{-1, 1\}^{2N}$ and Bob is given $y\in\{-1, 1\}^{2N}$. Their goal is to compute the partial boolean function $F$ defined as follows.
\[F(x,y)=\begin{cases} -1 & \text{ if } forr(x\cdot y ) \ge \epsilon/4 \\ 1 & \text{ if } forr(x\cdot y) \le \epsilon/8 \end{cases} \]  
\end{definition}

\begin{theorem} \label{theorem2} The forrelation problem can be solved in the quantum simultaneous with entanglement model with $O(\log^3 N)$ bits of communication, when Alice and Bob are given access to $O(\log^3N)$ bits of shared entanglement. Moreover, the protocol is efficient, as it can be implemented by a $O(\log^3 N)$ size quantum circuit with oracle access to inputs.
\end{theorem}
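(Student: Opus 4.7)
The plan is to implement an Aaronson--Ambainis style forrelation estimator on $z := x\cdot y$ by having Alice and Bob use shared entanglement to deliver, in one simultaneous round, a single coherent state at the referee that encodes both halves of $z$, and then have the referee run a Hadamard test to recover $forr(z)$ with the correct sign. I will take one basic subprotocol and repeat it $T = O(1/\epsilon^2) = O(\log^2 N)$ times in parallel.

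In a single copy, Alice and Bob share a $2N$-dimensional maximally entangled state $\ket{\Phi} = \tfrac{1}{\sqrt{2N}}\sum_{k=1}^{2N}\ket{k}_A\ket{k}_B$, i.e.\ $n+1$ EPR pairs. Alice applies her oracle $O_x$ to her half and Bob applies $O_y$ to his, so the shared state becomes $\tfrac{1}{\sqrt{2N}}\sum_k z_k \ket{k}_A \ket{k}_B$. They each forward their $n+1$ qubits to the referee, who applies $n+1$ CNOTs from the $A$-register to the $B$-register to uncompute $B$, leaving him with the single-register state $\tfrac{1}{\sqrt{2N}}\sum_k z_k\ket{k} = \tfrac{1}{\sqrt{2}}\bigl(\ket{0}\ket{u} + \ket{1}\ket{v}\bigr)$, where $\ket{u} := \tfrac{1}{\sqrt N}\sum_i (z_1)_i\ket{i}$, $\ket{v} := \tfrac{1}{\sqrt N}\sum_i (z_2)_i\ket{i}$, and the leading qubit of $k$ indexes which half of $z$ is being read. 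The referee then applies controlled-$H_N$ on the last $n$ qubits (controlled on the leading qubit being $\ket{1}$), applies a single-qubit Hadamard to the leading qubit, and measures it. A direct computation using $\langle u | H_N | v \rangle = \tfrac{1}{N^{3/2}}\sum_{i,j}(-1)^{\langle i,j\rangle}(z_1)_i(z_2)_j = forr(z)$ together with $\|u\|=\|v\|=1$ shows that the $\pm 1$-valued outcome of this measurement has expectation exactly $forr(x\cdot y)\in[-1,1]$. Averaging the $T$ parallel outcomes and invoking a Hoeffding bound then distinguishes $forr\ge\epsilon/4$ from $forr\le\epsilon/8$ with probability at least $2/3$.

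For the resource count, each copy uses $n+1 = O(\log N)$ qubits of shared entanglement, $n+1$ qubits of communication from each of Alice and Bob, exactly one oracle call per player, and $O(\log N)$ gates at the referee: the CNOT column, $n$ controlled-Hadamards (each compiled into $O(1)$ gates from $\{H,\mathrm{CNOT},R_{\pi/8}\}$), one final Hadamard, and one measurement. Summing over $T = O(\log^2 N)$ parallel copies gives $O(\log^3 N)$ qubits of communication, $O(\log^3 N)$ shared EPR pairs, and a quantum circuit of size $O(\log^3 N)$ with oracle access to $x$ and $y$, which matches the theorem statement.

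The one conceptual subtlety is that one must extract the \emph{signed} value of $forr(z)$, not its magnitude: a ``no'' input could in principle have very negative forrelation, which a naive SWAP test between independently prepared $\ket{u}$ and $\ket{v}$ would misclassify as ``yes''. The step I expect to require the most care is therefore the decision to package the shared entanglement as a single $2N$-dimensional EPR state rather than two independent $N$-dimensional ones. This packaging is precisely what produces the coherent ancilla-indexed superposition $\tfrac{1}{\sqrt{2}}(\ket{0}\ket{u}+\ket{1}\ket{v})$ at the referee, which in turn makes a phase-sensitive Hadamard test against the unitary $H_N$ --- whose matrix element between $\ket{u}$ and $\ket{v}$ equals $forr(z)$ --- recover $forr$ with the correct sign. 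Everything else (the CNOT uncomputation, the decomposition of controlled-$H_N$ into the standard gate set, and the concentration argument) is routine.
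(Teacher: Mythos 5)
Your protocol is correct and is essentially the same as the paper's: the same packaging of the shared entanglement as copies of a single $2N$-dimensional maximally entangled state, the same oracle calls by Alice and Bob, the same CNOT-based uncomputation at the referee to obtain $\tfrac{1}{\sqrt 2}(\ket{0}\ket{u}+\ket{1}\ket{v})$, the same controlled-Hadamard-transform-plus-Hadamard-test to read off the signed $forr(x\cdot y)$ (the paper applies $H_N$ to the first half rather than the second, which is equivalent since $H_N$ is real symmetric), and the same $O(1/\epsilon^2)=O(\log^2 N)$ repetitions with a Chernoff/Hoeffding threshold. The resource accounting also matches the paper's.
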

\begin{theorem} \label{theorem3} The randomized bounded-error interactive communication cost of the forrelation problem is $\tilde{\Omega}(N^{\frac{1}{4}})$.
\end{theorem}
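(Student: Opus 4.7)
The plan is to reduce the randomized lower bound to Theorem~\ref{theorem1} via Yao's minimax principle, after a short error amplification step. Let $\mathcal{V}$ be the planted distribution of Theorem~\ref{theorem1}: draw $z \sim \mathcal{D}$, $x \sim U_{2N}$, and set $y := x \cdot z$, so that $x \cdot y = z$ and hence $forr(x \cdot y) = forr(z)$; let $\mathcal{U}$ be uniform on $\{-1,1\}^{4N}$.

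The first step is to estimate $p := \mathbb{P}_{\mathcal{V}}[F = -1]$ and $q := \mathbb{P}_{\mathcal{U}}[F = 1]$. Using Lemma~\ref{raztallemma} and the trivial bound $forr(z) \leq 1$, the chain $\epsilon/2 \leq \mathbb{E}_{z\sim\mathcal{D}}[forr(z)] \leq p \cdot 1 + (1-p)\cdot(\epsilon/4)$ gives $p \geq \epsilon/4 = \Omega(1/\log N)$. Under $\mathcal{U}$, $u := x \cdot y$ is uniform on $\{-1,1\}^{2N}$, and a direct Fourier computation (using that the rows of $H_N$ are orthonormal) gives $\mathbb{E}[forr(u)] = 0$ and $\mathrm{Var}[forr(u)] = 1/N$; Chebyshev then yields $1 - q \leq 64/(N\epsilon^2) = O(\log^2 N / N)$, which is $o(p)$.

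Now suppose toward contradiction that a randomized protocol $R$ of cost $c$ computes $F$ with error $\leq 1/3$. Taking majority vote of $k = O(\log(1/p)) = O(\log\log N)$ independent runs yields a protocol $R'$ of cost $c' = O(c \log\log N)$ with error $\leq p/4$ on the support of $F$. Hence $\mathbb{E}_{\mathcal{V}\mid \text{YES}}[R'] \leq -1 + p/2$ and $\mathbb{E}_{\mathcal{U}\mid \text{NO}}[R'] \geq 1 - p/2$. Bounding the contributions from outside the support of $F$ trivially by $\pm 1$ and using $1 - q = o(p)$,
\begin{equation*}
\mathbb{E}_{\mathcal{U}}[R'] - \mathbb{E}_{\mathcal{V}}[R'] \;\geq\; 2(p + q - 1) - \tfrac{p(p+q)}{2} \;\geq\; p/2.
\end{equation*}
Averaging over the internal randomness produces a deterministic protocol $D$ of cost $c'$ with $|\mathbb{E}_{\mathcal{U}}[D] - \mathbb{E}_{\mathcal{V}}[D]| \geq p/2 = \Omega(1/\log N)$. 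Theorem~\ref{theorem1} then forces $(c')^2 / \sqrt{N} = \Omega(1/\log N)$, so $c' = \tilde{\Omega}(N^{1/4})$ and, absorbing the $\log\log N$ overhead, $c = \tilde{\Omega}(N^{1/4})$.

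The main obstacle is the mismatch between ``computing $F$ on its support'' (all that the protocol is guaranteed to do) and ``distinguishing $\mathcal{V}$ from $\mathcal{U}$'' (what Theorem~\ref{theorem1} actually controls): inputs outside the support of $F$ carry non-trivial mass under both distributions, and because $p = \Theta(1/\log N)$ is tiny, directly averaging a $1/3$-error protocol yields no distinguishing advantage at all. The amplification step, whose only cost is an $O(\log\log N)$ factor that is absorbed into $\tilde{\Omega}$, pushes the error below $p$ and makes the ``junk'' regions contribute $o(p)$ to each expectation, preserving the $\Omega(1/\log N)$ signal required to invoke Theorem~\ref{theorem1}.
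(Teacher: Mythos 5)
Your proposal is correct and follows essentially the same route as the paper: amplify the error to below $\Theta(\epsilon)$ with $O(\log\log N)$ repetitions, use Markov (via Lemma~\ref{raztallemma}) to show the planted distribution puts $\Omega(\epsilon)$ mass on \textsc{yes} instances and Chebyshev to show the uniform distribution is almost entirely \textsc{no} instances, deduce an $\Omega(1/\log N)$ distinguishing advantage for some fixed deterministic protocol, and invoke Theorem~\ref{theorem1}. The only differences are cosmetic (you parametrize by $p$ and $q$ where the paper works directly with $\epsilon/4$ and $\epsilon/16$), and your arithmetic for the advantage checks out.
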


\section{Proof of Theorem \ref{theorem2}: Quantum Upper Bound}

Our protocol will be based on three standard subroutines described in Figures 3, 4 and 5. The first is the {\it Swap} test between vectors $\ket{\phi}$ and $\ket{\psi}$, which takes as input the state $\frac{1}{\sqrt{2}}\ket{0} \ket{\phi} +\frac{1}{\sqrt{2}}\ket{1} \ket{\psi}$ and outputs $\ket{1}$ with probability $\frac{1+\langle \phi \mid \psi \rangle}{2}$ and $\ket{0}$ with the remaining probability. This can be implemented by applying a Hadamard on the first bit and then measuring it and negating the outcome. The probability associated with the outcome $\ket{1}$ is precisely $\frac{\|\ket{\phi}+\ket{\psi}\|^2}{4}=\frac{1+\langle \phi \mid \psi\rangle}{2}$. The second subroutine is a controlled erase/entangle operator $E$ which exchanges the basis states $\ket{i}\ket{i}$ and $\ket{i}\ket{0}$ for every $i\in [N]$. Its action on other states can be arbitrary.  It can be implemented as follows. For each register $j\in [\log N]$, negate the contents of the $(\log N + j)$-th register, controlled on the contents of the $j$-th register.  The third subroutine is a \textsc{Controlled Hadamard} operator. This is a two-qubit operator which applies $H$ on the second register if the content of the first register is $\ket{1}$ and otherwise does nothing. It can be implemented as shown in Figure 5.

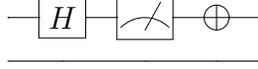
\begin{figure}[h!]
\centering
\mbox{ 
\Qcircuit @C=1em @R=.7em {
& \gate{H}  & \meter & \targ &\qw \\
& \qw & \qw & \qw &\qw
}}
\caption{The $Swap$ test}
\end{figure}
 
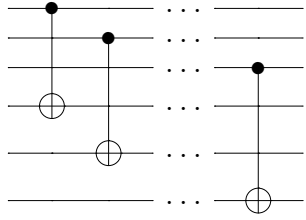
\begin{figure}[h!]
\centering
\mbox{ 
\Qcircuit @C=1em @R=.7em {
& \ctrl{3} & \qw & \qw & \ldots && \qw & \qw\\
& \qw & \ctrl{3} & \qw & \ldots && \qw &\qw \\
& \qw & \qw & \qw & \ldots&& \ctrl{3} &\qw\\
&  \targ & \qw & \qw & \ldots && \qw &\qw\\
&  \qw & \targ& \qw & \ldots&& \qw  &\qw \\
&    \qw & \qw& \qw & \ldots && \targ  &\qw\\
}
}
\caption{The $E$ operator}
\end{figure}

\begin{figure}[h!]
\centering
\mbox{ 
\Qcircuit @C=1em @R=.7em {
&\ctrl{1} & \qw &    & & \qw & \qw & \ctrl{1} & \qw & \qw  &\qw \\
&\gate{H} & \qw & = & & \gate{H}  & \gate{R_{\pi/8}} & \targ & \gate{H} & \gate{R_{\pi/8}} & \qw \\
}}
\caption{The \textsc{Controlled Hadamard} operator}
\end{figure}
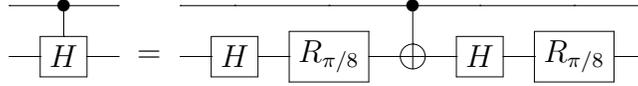

\subsection{Quantum Communication Protocol for Forrelation}
Let $m=c \log^3 (2N)$ for some large enough constant $c$. Let $M=2^m$. We first observe the following identity.
\[ \frac{1}{\sqrt{M}}\sum_{i=1}^{M} \ket{i}^A\ket{i}^B= \left( \frac{1}{\sqrt{2N}} \sum_{i=1}^{2N}\ket{i}^A\ket{i}^B \right)^{\otimes c\log^2 (2N)}\]
Henceforth, we will assume that Alice and Bob have  $c\log^2(2N) $ independent copies of the state $\frac{1}{\sqrt{2N}} \sum_{i=1}^{2N}\ket{i}^A\ket{i}^B$. Consider the following protocol based on the algorithm for forrelation by Aaronson and Ambainis~\cite{aaronsonambainis} and by Raz and Tal~\cite{raztal}.

 \begin{enumerate}[(1.)]
\item Let $x=(x_1,x_2)$ and $y=(y_1,y_2)$ be Alice's and Bob's inputs respectively, where $x_1,x_2,y_1,y_2 \in \{-1, 1\}^N$. Recall that Alice and Bob are given $c\log^2 (2N)$ copies of the following maximally entangled state.
\[ \frac{1}{\sqrt{2N}}\sum_{i=1}^{2N}\ket{i}^A\ket{i}^B= \frac{1}{\sqrt{2N}} \sum_{i=1}^N \ket{0i}^A\ket{0i}^B +\frac{1}{\sqrt{2N}}\sum_{i=1}^N \ket{1i}^A\ket{1i}^B\]
For each copy, Alice (respectively Bob) applies the oracle to her input $O_x$ (respectively $O_y$) to create the state
\[\ket{\gamma}:= \frac{1}{\sqrt{2N}} \sum_{i=1}^N \ket{0i}^A\ket{0i}^Bx_1(i)y_1(i) +\frac{1}{\sqrt{2N}}\sum_{i=1}^N \ket{1i}^A\ket{1i}^Bx_2(i)y_2(i)\]
Alice and Bob simultaneously send all their copies of this state to the referee.
\item For each copy, the referee uses the $E$ operator to create the state 
\[ \frac{1}{\sqrt{2}}\Big(\ket{0}\ket{\phi}+\ket{1}\ket{\psi}\Big) \ket{0^{\log N+1}}\] 
\[\text{where }\quad  \ket{\phi}=\frac{1}{\sqrt{N}} \sum_{i=1}^N \ket{i} x_1(i)y_1(i) \quad\text{and}\quad \ket{\psi}=  \frac{1}{\sqrt{N}} \sum_{i=1}^N \ket{i}x_2(i)y_2(i) \]
 Ignoring the last few blank registers, the referee has $\frac{1}{\sqrt{2}}\Big(\ket{0}\ket{\phi}+\ket{1}\ket{\psi}\Big)$. The referee first negates the content of the first register. He then performs a series of controlled Hadamard operators where the control is always on the first register and the target registers vary from $i=2$ to $\log N+1$. He thus obtains:
\[ \frac{1}{\sqrt{2}}\ket{1}\otimes H_N(\ket{\phi}) +  \frac{1}{\sqrt{2}}\ket{0}\otimes \ket{\psi}\]
This allows the referee to perform a $Swap$ test between $H_N(\ket{\phi})$ and $\ket{\psi}$. He appends the output of the swap test to an auxilliary register.
\item The referee returns 1 if the fraction of $1$ entries in the registers exceeds $\frac{1}{2}+ \frac{3}{32}\cdot \epsilon$ and $-1$ otherwise.
\end{enumerate}

\subsection{Correctness of the Quantum Protocol}

The expected fraction of 1 entries of the swap test between $H_N(\phi)$ and $\psi$ is
\[ \frac{1}{2}+\frac{1}{2} \left\langle  \frac{1}{\sqrt{N}} H_N\Big(\sum_i \ket{i} x_1(i)y_1(i) \Big) \Bigg| \frac{1}{\sqrt{N}} \sum_i \ket{i} x_2(i)y_2(i)  \right\rangle = \frac{1}{2} + \frac{1}{2} forr(x\cdot y)\]
The promise on the inputs is that this quantity is at least $\frac{1}{2}+\frac{\epsilon}{8}$ for \textsc{yes} instances, while it is at most $\frac{1}{2}+\frac{\epsilon}{16}$ for \textsc{no} instances. A simple application of the additive Chernoff bound implies that if $\mu$ is the random variable describing the average of $O(1/\epsilon^2)=c\log^2(2N)$ independent trials of the test, then, for a large enough constant $c$, with probability at least 2/3, the random variable is within $\epsilon/32$ of its mean. This means that for \textsc{yes} instances, the fraction of 1 entries in the referees register is greater than $\frac{1}{2}+\frac{\epsilon}{8}-\frac{\epsilon}{32}\ge \frac{1}{2}+\frac{3\epsilon}{32}$ with high probability, while for \textsc{no} instances, it is less than $\frac{1}{2}+\frac{\epsilon}{16}+\frac{\epsilon}{32}\le \frac{1}{2}+\frac{3\epsilon}{32}$ with high probability. This proves the correctness of the protocol.

\subsection{Quantum Circuit for Forrelation}

The above protocol can be described by a quantum circuit of small size (see Figure 6). We first remark that the subroutines $Swap$, the controlled Hadamard and $E$ are efficient, since the first two involve $O(1)$ single-bit operations while the $E$ operator involves $\log N$ controlled not operators. The initial entangled state $\frac{1}{\sqrt{M}} \sum_{i=1}^M \ket{i}^A\ket{i}^B$ can be created by applying a Hadamard gate on the first $\log M$ registers and then applying the erase operator on the registers $1,\ldots,2\log M$. Step (1.) requires one parallel oracle query to $O_x$ and to $O_y$ for each of the $c\log^2(2N)$ copies. Step (2.) involves a single application of the $E$ operator, a negation operator, $O(\log N)$ applications of the controlled Hadamard and one $Swap$ test, for each of the $c\log^2(2 N)$ copies. The entire circuit is thus composed of $O(\log^3N)$ operators.
 
\begin{figure}
\centering
\mbox{ 
\Qcircuit @C=1em @R=.7em {
& \ket{0} &  & \gate{H} &  \multigate{7}{E}&  \qw & \multigate{3}{\mathcal{O_A}} & \qw & \multigate{7}{E}  & \qw & \targ &\ctrl{1}  &\ctrl{2} &\qw & \ldots&& \ctrl{3} & \gate{Swap} & \qw  \\ 
& \ket{0}  &  & \gate{H}& \ghost{E} &　\qw  & \ghost{\mathcal{O_A}} & \qw  & \ghost{E} & \qw &\qw &  \gate{H} & \qw & \qw & \ldots&&\qw &\qw &    \\
& \ket{0}  &  & \gate{H}&  \ghost{E} &　\qw  & \ghost{\mathcal{O_A}} & \qw  & \ghost{E} & \qw &  \qw &\qw &  \gate{H} &\qw & \ldots&&\qw &\qw &    \\
& \ket{0}  & & \gate{H} & \ghost{E} &　 \qw & \ghost{\mathcal{O_A}} & \qw & \ghost{E} & \qw  &  \qw &\qw &\qw &  \qw&  \ldots&&\gate{H} &\qw \\
& \ket{0} &  & \qw& \ghost{E} &　\qw & \multigate{3}{\mathcal{O_B}} & \qw & \ghost{E} & \qw \\ 
& \ket{0}  &  &  \qw& \ghost{E} &　\qw& \ghost{\mathcal{O_B}} & \qw  & \ghost{E} & \qw  \\
& \ket{0} &  &  \qw& \ghost{E} &　\qw& \ghost{\mathcal{O_B}} & \qw  & \ghost{E} & \qw  \\
& \ket{0} &   &\qw  & \ghost{E} &　\qw&  \ghost{\mathcal{O_A}} & \qw & \ghost{E} & \qw \\
}
}
\caption{Circuit describing one component of the quantum protocol. The output of this component is the outcome of the $Swap$ test. The final circuit is obtained by taking a threshold of the outputs of $O(\log^2N)$ copies of this circuit.}
\end{figure}
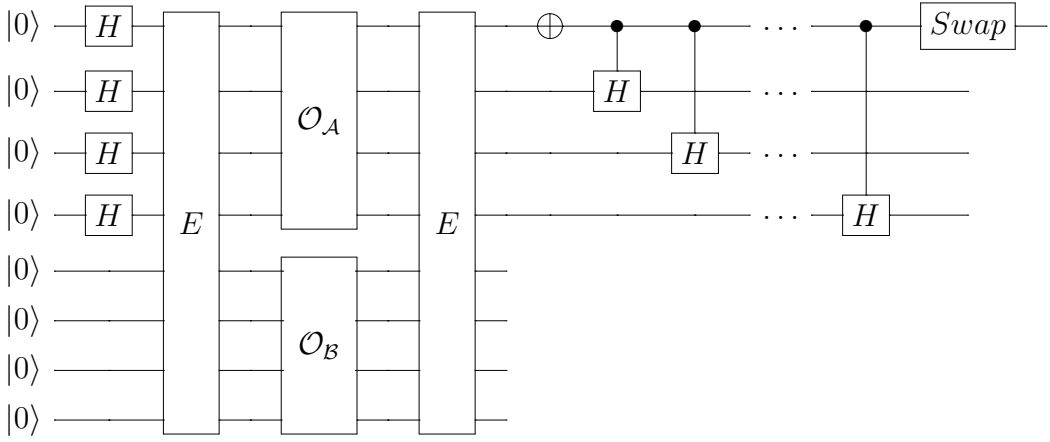

\section{Proof of Theorem \ref{theorem1}: Distributional Lower Bound}

Let $C:\{-1, 1\}^{2N}\times \{-1, 1\}^{2N}\rightarrow \{-1, 1\}$ be any deterministic protocol of cost at most $c$. Let $ D:\{-1, 1\}^{2N}\times \{-1, 1\}^{2N}\rightarrow \{-1, 1\}$ be defined as follows. For $x,z\in\{-1,1\}^{2N}$,
\[ D(x,z):=C(x,x\cdot z)\]
We will also use $D(x,z)$ to refer to its mulilinear extension. Note that our goal is to show that the function $ \underset{x\sim U_{2N}}{\E} \left[D(x, z)\right]$ of $z$ is fooled by $\mathcal{D}$. Towards this, we will prove that it is fooled by $p \mathcal{G}$ for small $p$. This approach was first used in ~\cite{chhl} and is analogous to Claim 7.2 in ~\cite{raztal}. 

\begin{lemma} \label{mainlemma1} Let $p\le \frac{1}{2N}$ and let $C(x,y)$ be any deterministic protocol of cost $c\ge1$ for the forrelation problem. As before, let $D(x, z):\mathbb{R}^{2N}\times \mathbb{R}^{2N}\rightarrow\mathbb{R}$ refer to the multilinear extension of $C(x,x\cdot z)$. Let $P\in [-p,p]^{2N}$. Then,
\[ \left| \underset{\substack{z\sim P\cdot \mathcal{G}\\x\sim U_{2N}}}{\E} \left[D(x, z)\right] - \underset{z,x\sim U_{2N}}{\E}\left[D(x,z)\right]  \right|  \le \frac{  120\epsilon c^2 p^2 }{\sqrt{N}} +p^4N^3 \]
\end{lemma}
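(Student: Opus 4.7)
The plan is to decompose the cost-$c$ protocol $C$ into its at most $2^c$ rectangles $A_i\times B_i$, on each of which $C$ takes a constant value $\alpha_i\in\{-1,1\}$, so that $D(x,z)=C(x,x\cdot z)=\sum_i \alpha_i\, \mathbbm{1}_{A_i}(x)\mathbbm{1}_{B_i}(x\cdot z)$. Averaging over uniform $x$ turns each product into a convolution in $z$: writing $G(z):=\E_{x\sim U_{2N}}[D(x,z)]$, we obtain $G(z)=\sum_i \alpha_i (\mathbbm{1}_{A_i}*\mathbbm{1}_{B_i})(z)$, so the Fourier coefficients factor as $\widehat G(S)=\sum_i \alpha_i\,\widehat{\mathbbm{1}_{A_i}}(S)\widehat{\mathbbm{1}_{B_i}}(S)$. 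The identities $\E_{z\sim P\cdot \mathcal{G}}[\chi_S(z)]=\chi_S(P)\widehat{\mathcal{G}}(S)$ and $\E_{z\sim U_{2N}}[\chi_S(z)]=\mathbbm{1}[S=\emptyset]$ then rewrite the quantity to be bounded as
\[
\Delta \;=\; \sum_{S\neq \emptyset} \widehat G(S)\,\chi_S(P)\,\widehat{\mathcal{G}}(S).
\]
By Claim~\ref{claim4}(1)--(2), only sets with $|S\cap[N]|=|S\cap\{N{+}1,\ldots,2N\}|=k\ge 1$ contribute, so I split $\Delta$ into a level-$2$ part ($k=1$) and a higher-level part ($k\ge 2$).

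For the level-$2$ part, I use $|\widehat{\mathcal{G}}(S)|=\epsilon/\sqrt{N}$ from Claim~\ref{claim4}(1) and $|\chi_S(P)|\le p^2$, reducing the task to bounding $\sum_i \sum_{|S|=2}|\widehat{\mathbbm{1}_{A_i}}(S)\widehat{\mathbbm{1}_{B_i}}(S)|$. Per rectangle, a Cauchy--Schwarz over $|S|=2$ combined with the Level-$k$ Inequality bounds the inner sum by $O(\alpha_i\beta_i \ln(1/\alpha_i)\ln(1/\beta_i))$, where $\alpha_i=\E[\mathbbm{1}_{A_i}]$ and $\beta_i=\E[\mathbbm{1}_{B_i}]$. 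Applying AM--GM, $\ln(1/\alpha_i)\ln(1/\beta_i)\le \tfrac14 \ln^2(1/(\alpha_i\beta_i))$, and using the partition identity $\sum_i \alpha_i\beta_i=1$ with at most $2^c$ rectangles, a Lagrangian/entropy argument gives $\sum_i w_i \ln^2(1/w_i)=O(c^2)$ with $w_i:=\alpha_i\beta_i$ (maximized at the uniform assignment $w_i=2^{-c}$). Tracking constants yields the first term of the lemma, $\le 120\,\epsilon c^2 p^2/\sqrt{N}$.

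For the higher-level part $k\ge 2$, I use a much cruder estimate, independent of $c$: the trivial bound $|\widehat G(S)|\le \|G\|_\infty\le 1$, Claim~\ref{claim4}(3) $|\widehat{\mathcal{G}}(S)|\le \epsilon^k k!/N^{k/2}$, and the count $\binom{N}{k}^2\le N^{2k}/(k!)^2$ of valid sets at level $2k$. Combined with $|\chi_S(P)|\le p^{2k}$, this gives a level-$2k$ contribution of at most $(p^2 \epsilon N^{3/2})^k/k!$. Writing $t:=p^2 \epsilon N^{3/2}$, the hypothesis $p\le 1/(2N)$ forces $t\le \epsilon/(4\sqrt N)\le 1$, so $\sum_{k\ge 2} t^k/k!\le t^2=p^4\epsilon^2 N^3\le p^4 N^3$, as claimed.

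The two main obstacles I anticipate are: (i) the non-concavity of $w\mapsto w\ln^2(1/w)$ on $(0,1)$, which means the $O(c^2)$ bound on the weighted entropy-squared sum cannot come from a one-line Jensen step and requires a combinatorial/Lagrangian argument that genuinely exploits the $\le 2^c$ rectangles constraint; and (ii) the Level-$k$ Inequality requires $k\le 2\ln(1/\alpha_i)$, so for rectangles with $\alpha_i$ (or $\beta_i$) above $1/e$ one must instead use the weaker Parseval bound $\sum_{|S|=2}\widehat{\mathbbm{1}_{A_i}}(S)^2\le \alpha_i$, introducing a bounded loss that must be absorbed into the final constant $120$.
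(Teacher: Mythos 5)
Your proposal follows essentially the same route as the paper: decompose the cost-$c$ protocol into rectangles, push the uniform average over $x$ into a convolution so that $\widehat G(S)$ factors across rectangles, split the error into the level-$2$ and higher-level ($k\ge 2$) contributions, control the level-$2$ Fourier mass via Cauchy--Schwarz plus the Level-$k$ Inequality, and crush the tail using the moment bounds from Claim~\ref{claim4} together with $p\le 1/(2N)$. Your two anticipated obstacles are real, but the paper disposes of both at once by padding the protocol with two extra bits per player so that every rectangle satisfies $\mu(A),\mu(B)\le 1/4$: this puts $\mu(A\times B)\le 1/16 < 0.3$, where $w\mapsto w\ln^2(1/w)$ \emph{is} concave, so a single Jensen step bounds $\sum_i w_i\ln^2(1/w_i)$ by $O(c^2)$ — no Lagrangian/entropy argument is needed — and simultaneously ensures $k=2\le 2\ln(1/\mu)$ so the Level-$k$ Inequality applies to every rectangle. (Your tighter $\binom{N}{k}^2$ count for the balanced sets in the tail is a small improvement over the paper's $\binom{2N}{2k}$, but leads to the same $p^4N^3$ bound.)
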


\begin{corollary} \label{maincorollary1} Under the same hypothesis as in Lemma \ref{mainlemma1},
\[ \Big| \E_{z\sim P\cdot \mathcal{G}} \left[D(0, z)\right] - D(0,0)  \Big|  \le \frac{  120\epsilon c^2 p^2 }{\sqrt{N}} +p^4N^3 \]
\end{corollary}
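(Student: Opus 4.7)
The plan is to derive this corollary directly from Lemma~\ref{mainlemma1} using a single elementary observation about multilinear polynomials: for any $f:\mathbb{R}^n \to \mathbb{R}$ that is multilinear in each variable, $\E_{x \sim U_n}[f(x)] = f(0,\dots,0)$, since every monomial containing some $x_i$ picks up a factor $\E[x_i] = 0$ and only the constant monomial survives.

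Since $D(x, z)$ is by definition the multilinear extension of $C(x, x\cdot z)$, it is multilinear in each of its $4N$ coordinates separately. Applying the observation above to the $x$-variables while holding $z \in \mathbb{R}^{2N}$ fixed gives $\E_{x \sim U_{2N}}[D(x, z)] = D(0, z)$. By Fubini, we therefore obtain
\[
\E_{\substack{z \sim P \cdot \mathcal{G}\\ x \sim U_{2N}}}[D(x, z)] \;=\; \E_{z \sim P \cdot \mathcal{G}}\bigl[\E_{x \sim U_{2N}}[D(x, z)]\bigr] \;=\; \E_{z \sim P \cdot \mathcal{G}}[D(0, z)],
\]
and, applying the same observation to the $z$-variables on the baseline term,
\[
\E_{z, x \sim U_{2N}}[D(x, z)] \;=\; \E_{z \sim U_{2N}}[D(0, z)] \;=\; D(0, 0).
\]
Plugging these two identities into the bound of Lemma~\ref{mainlemma1} yields the corollary with exactly the same right-hand side.

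There is no genuine obstacle here: once Lemma~\ref{mainlemma1} is in hand, this corollary is essentially a restatement in which the uniform average over $x$ has been collapsed to evaluation at $x = 0$, and the only nontrivial input is that $D$ is multilinear in each of its variables, which holds by definition of the multilinear extension. All the real work lies in Lemma~\ref{mainlemma1} itself.
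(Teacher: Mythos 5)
Your proposal is correct and is essentially identical to the paper's own proof: both collapse $\E_{x\sim U_{2N}}[D(x,z)]$ to $D(0,z)$ and $\E_{z\sim U_{2N}}[D(0,z)]$ to $D(0,0)$ using multilinearity of $D$, and then invoke Lemma~\ref{mainlemma1}. Nothing further is needed.
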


\begin{proof}[Proof of Corollary \ref{maincorollary1} from Lemma \ref{mainlemma1}]
Since $D(x,z)$ is a multilinear polynomial, for all $z\in \mathbb{R}^{2N}$, we have $\E_{x\sim U_{2N}}\left[D(x,z)\right]=D(0,z)$. This implies that 
\[ \underset{\substack{z\sim P\cdot \mathcal{G}\\x\sim U_{2N}}}{\E} \left[D(x, z) \right]= \E_{z\sim P\cdot \mathcal{G}} \left[D(0, z)\right]\]
We also have $ \E_{z\sim U_{2N}}\left[D(0,z)\right]=D(0,0) $. This implies that 
\[ \underset{z,x\sim U_{2N}}{\E}\left[D(x,z)\right] = D(0,0)  \]
The proof of Corollary \ref{maincorollary1} follows from the above two equalities and Lemma \ref{mainlemma1}.
 \end{proof}
 
\begin{proof}[Proof of Lemma \ref{mainlemma1}] We begin by observing some properties of the distribution $P\cdot \mathcal{G}$. The sample $z\sim P\cdot \mathcal{G}$ is obtained by scaling the $i$-th coordinate of $z'\sim\mathcal{G}$ by $P_i$ for each $i\in[2N]$. This implies that for all $S\subseteq [2N],$
\begin{equation} \label{eqnnum1} \underset{z\sim P\cdot \mathcal{G}}{\E}\left[\chi_S(z)\right]=\left(\prod_{i\in S}P_i\right) \underset{z\sim \mathcal{G}}{\E}\left[\chi_S(z)\right] \end{equation}
Part (2.) of Claim \ref{claim4} implies that the odd moments of $\mathcal{G}$ are zero. Equation (\ref{eqnnum1}) implies that this is also true for $P\cdot \mathcal{G}$. That is, for all $S\subseteq [2N]$, 
\begin{equation} \label{eqnnum2} |S| \text{ is odd } \implies \underset{z\sim P\cdot \mathcal{G}}{\E}\left[\chi_S(z)\right]=0 \end{equation}
Part (3.) of Claim \ref{claim4} implies that for $S\subseteq [2N], |S|=2k$, the $S$-th moment $\mathbb{E}_{z\sim \mathcal{G}}\chi_S(z)$ is at most $\epsilon^k k! N^{-k/2}$ in magnitude. Along with equation (\ref{eqnnum1}), this implies that for $k\in \mathbb{N}$,
\begin{equation}\label{eqnnum3} |S|=2k\implies \left| \underset{z\sim P\cdot \mathcal{G}}{ \E}\left[{\chi_S(z)}\right]\right| \le \left(\prod_{i\in S}P_i \right) \epsilon^k k! N^{-k/2}  \le p^{2k}\epsilon^k k! N^{-k/2} \end{equation}

We now proceed with the proof of the lemma. Let
\[\Delta:=\left| \underset{\substack{z\sim P\cdot \mathcal{G}\\x\sim U_{2N}}}{\E} \left[D(x, z)\right] - \underset{z,x\sim U_{2N}}{\E}\left[D(x,z) \right] \right| \]
Note that this is the quantity we wish to bound in the lemma. For ease of notation, let $H:\{-1,1\}^{2N}\rightarrow [-1,1]$ be defined at every point $z\in \{-1,1\}^{2N}$ by
\[ H(z):=\underset{x\sim U_{2N}}{\E} \left[D(x, z)\right] \]
We identify $H(z)$ with its multilinear extension. Note that by uniqueness of multilinear extensions, the above equality holds even for $z\in \mathbb{R}^{2N}$. This implies that
\[ \underset{\substack{z\sim P\cdot \mathcal{G}\\x\sim U_{2N}}}{\E} \left[D(x, z)\right]= \underset{z\sim P\cdot \mathcal{G}}{\mathbb{E}} \left[H(z)\right] \quad \text{ and }\quad  \underset{z,x\sim U_{2N}}{\E} [D(x, z)]=  \underset{z\sim U_{2N}}{\E}  [H(z)] \]
This, along with the definition of $\Delta$ implies that
\[ \Delta=\left| \underset{z\sim P\cdot \mathcal{G}}{\E} [H(z)]- \underset{z\sim U_{2N}}{\E} [H(z)]\right| \]
Note that $H(z)=\sum_S \widehat{H}(S) \chi_S(z)$ for all $z\in \mathbb{R}^{2N}$. This implies that for all distributions $\mathcal{Z}$ on $\mathbb{R}^{2N}$, we have $ \underset{z\sim \mathcal{Z}}{\E} [H(z)] = \sum_S \widehat{H}(S) \underset{z\sim \mathcal{Z}}{\E} [\chi_S(z)]$. This implies that
\[ \Delta= \left| \sum_{S\subseteq [2N]} \widehat{H}(S) \left( \underset{z\sim P\cdot \mathcal{G}}{\E} [\chi_S(z)] -  \underset{z\sim U_{2N}}{\E} [\chi_S(z) ]\right) \right| \]
For any probability distribution, the moment corresponding to the empty set is 1 by definition. For all non empty sets $S$, we have $\underset{z\sim U_{2N}}{\E} [\chi_S(z)]=0$. Using this fact in the above equality, along with the triangle inequality, we have
\[ \Delta = \left| \sum_{\emptyset\neq S\subseteq [2N]}  \widehat{H}(S) \underset{z\sim P\cdot  \mathcal{G}}{\E}[\chi_S(z)]  \right|   \le \sum_{\emptyset\neq S \subseteq [2N]} \left| \widehat{H}(S)\right| \left|  \underset{z\sim P\cdot \mathcal{G}}{\E}[ \chi_S(z)]  \right|  \] 
We use the bounds from (\ref{eqnnum2}) and (\ref{eqnnum3}) on the moments of $P\cdot \mathcal{G}$ to derive the following.
\begin{align*}\begin{split}
 \Delta&\le  \underset{\substack{|S|=2k\\k \ge 1}}{\sum}\left| \widehat{H}(S)\right|  p^{2k} \epsilon^k k! N^{-k/2} \\
&= \underset{k \ge 1}{\sum} L_{2k}(H) p^{2k}  \epsilon^k k! N^{-k/2} \\
\end{split}\end{align*}
We upper bound $L_{2k}(H)$ by ${2N\choose 2k}$ when $k\ge2$. This implies that
\begin{align*}\begin{split}
\Delta&\le L_{2}(H) \frac{\epsilon p^2 }{\sqrt{N}} +  \sum_{k \ge 2} {2N\choose 2k}  p^{2k}\epsilon^k k!N^{-k/2} \\
&\le L_{2}(H) \frac{\epsilon p^2 }{\sqrt{N}} +  \sum_{k \ge 2} \frac{2^{2k}N^{2k}}{(2k)!} p^{2k}\epsilon^k k!N^{-k/2} \\
&\le L_{2}(H) \frac{\epsilon p^2 }{\sqrt{N}}  +  \sum_{k \ge 2}N^{3k/2}p^{2k}4^k\epsilon^k \\
\end{split}\end{align*}
In the summation $\sum_{k \ge 2}N^{3k/2}p^{2k}4^k\epsilon^k$, we see that every successive term is smaller than the previous by a factor of at least $1/4$. This is because the assumption $p\le \frac{1}{2N}$ implies that $p^2N^{3/2}\le p^2N^2\le \frac{1}{4}$ and because $4\epsilon \le 1$. Thus, we can bound this summation by twice the first term, which is $16p^4 N^3\epsilon^2.$ This implies that
\[\Delta \le L_2(H) \frac{\epsilon p^2 }{\sqrt{N}}  +32p^4N^3\epsilon^2 \]
Since $\epsilon = \frac{1}{50\ln N} \le \frac{1}{32}$, we may bound $32p^4N^3\epsilon^2$ by $p^4N^3$. This implies that
\[\Delta \le   L_2(H)\frac{\epsilon p^2 }{\sqrt{N}} +p^4N^3 \]
The following claim provides a bound on $L_2(H)$.
\begin{claim} \label{weightbound} Let $C(x,y):\{-1,1\}^{2N}\times \{-1,1\}^{2N}\rightarrow \{-1,1\}$ be any deterministic protocol of cost $c\ge1$, let $D(x, z):\mathbb{R}^{2N}\times \mathbb{R}^{2N}\rightarrow\mathbb{R}$ refer to the unique multilinear extension of $C(x,x\cdot z)$ and $H:\mathbb{R}^{2N}\rightarrow \mathbb{R}$ be defined by $H(z)=\mathbb{E}_{x\sim U_{2N}} D(x,z)$. Then,
\[ L_2(H)\le 120c^2\]
\end{claim}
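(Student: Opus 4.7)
The plan is to exploit the rectangle structure of $C$ and the convolution structure of $H$. A deterministic protocol of cost $c$ partitions $\{-1,1\}^{2N}\times\{-1,1\}^{2N}$ into at most $2^c$ rectangles $A_i\times B_i$ on each of which $C$ is a constant $\sigma_i\in\{-1,+1\}$, so $C(x,y)=\sum_i \sigma_i \mathbbm{1}_{A_i}(x)\mathbbm{1}_{B_i}(y)$. Substituting $y=x\cdot z$ and averaging over $x\sim U_{2N}$, by the definition of convolution in the preliminaries,
\[ H(z) \;=\; \sum_i \sigma_i \,(\mathbbm{1}_{A_i}*\mathbbm{1}_{B_i})(z). \]
Applying $\widehat{f*g}(S)=\widehat{f}(S)\widehat{g}(S)$, the triangle inequality, and Cauchy--Schwarz on each rectangle, I obtain
\[ L_2(H) \;\le\; \sum_i \Big(\sum_{|S|=2}\widehat{\mathbbm{1}_{A_i}}(S)^2\Big)^{1/2}\Big(\sum_{|S|=2}\widehat{\mathbbm{1}_{B_i}}(S)^2\Big)^{1/2}. \]

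Write $\alpha_i:=\mathbb{E}[\mathbbm{1}_{A_i}]$ and $\beta_i:=\mathbb{E}[\mathbbm{1}_{B_i}]$. The next step is to bound each inner quantity by the Level-$2$ inequality cited from O'Donnell, which yields $\sum_{|S|=2}\widehat{\mathbbm{1}_{A_i}}(S)^2 = O(\alpha_i^2 \ln^2(e/\alpha_i))$, and analogously for $B_i$. For the at-most-constantly-many rectangles with $\alpha_i>1/e$ (on each side), the Parseval bound $\sum_{|S|=2}\widehat{\mathbbm{1}_{A_i}}(S)^2\le\alpha_i$ can be folded into the same expression. Applying AM--GM to the logarithmic factors, this gives
\[ L_2(H) \;\le\; O\Big(\sum_i \alpha_i\beta_i\,\ln^2\!\big(e^2/(\alpha_i\beta_i)\big)\Big). \]
Setting $w_i:=\alpha_i\beta_i$, the partition property $\sum_i |A_i||B_i|=2^{4N}$ forces $\sum_i w_i=1$ with at most $2^c$ nonzero terms, so it suffices to show $\sum_i w_i\ln^2(1/w_i)=O(c^2)$.

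This last estimate is the ``concavity argument'' flagged in the overview and is the main obstacle, because $\phi(w):=w\ln^2(1/w)$ is concave only on $(0,1/e)$ and convex on $(1/e,1]$, so a single application of Jensen's inequality does not suffice. My plan is to split the indices into ``small'' ($w_i\le 1/e$) and ``large'' ($w_i>1/e$). On the small set $\phi$ is concave, so Jensen applied to the restricted renormalized distribution bounds the small-set contribution by $S\ln^2(k/S)$, where $S\le 1$ is the small-set mass and $k\le 2^c$ its cardinality; expanding $\ln(k/S)\le c\ln 2+\ln(1/S)$ and using $\max_{S\in(0,1]} S\ln^2(1/S)=4/e^2$ yields an $O(c^2)$ bound. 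The large set contains at most a handful of atoms (each $>1/e$ with total mass $\le 1$) and each contributes $\phi(w_i)\le w_i\le 1$, so its total contribution is $O(1)$. Tracking the constants through the Level-$2$, Cauchy--Schwarz, and AM--GM steps then yields the explicit bound $L_2(H)\le 120c^2$ in the statement.
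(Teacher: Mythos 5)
Your proposal is correct and follows essentially the same route as the paper: rectangle decomposition, the convolution/Fourier identity, Cauchy--Schwarz per rectangle, the Level-2 inequality, and a Jensen/concavity argument on $w\mapsto w\ln^2(1/w)$ with $\sum_i w_i=1$ over at most $2^c$ terms. The only divergence is in handling the regime where the Level-2 inequality and concavity fail ($\mu>1/e$): the paper sidesteps it by adding two bits of communication per player to refine every rectangle side to measure at most $1/e$ (yielding $e^2(c+4)^2(\ln 2)^2\le 120c^2$), whereas you keep the original partition and treat the at most two large atoms separately via Parseval and a direct bound --- both are valid, and your constants do work out below $120$.
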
 
This claim along with the preceding inequality implies that
\[\Delta \le 120c^2 \frac{\epsilon p^2}{\sqrt{N}} + p^4N^3  \]
This completes the proof of Lemma \ref{mainlemma1}.
\end{proof}

\begin{proof}[Proof of Claim \ref{weightbound}] In order to bound the level-2 Fourier mass of $H$, we will use the following lemma. Its statement and proof appear as `Level-$k$ Inequalities' on Page 259 of `Analysis of Boolean Functions' \cite{odonnell}.
\begin{lemma}[Level-$k$ Inequalities] \label{levelkinequality}  Let $F:\{-1,1\}^n\rightarrow \{0,1\}$ have mean $\E[F]=\alpha$ and let $k\in \mathbb{N}$ be at most $2\ln(1/\alpha)$. Then,
\[ \sum_{|S|=k} \left(\widehat{F}(S)\right)^2\le\alpha^2 \left(\frac{2e}{k}\ln(1/\alpha)\right)^k \]
\end{lemma}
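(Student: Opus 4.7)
The plan is to derive the inequality from the $(p,2)$-hypercontractive inequality of Bonami--Beckner, applied to $F$ at a value of $p \in [1,2]$ that will be optimized in terms of $\alpha$ and $k$. Recall the noise operator $T_\rho$, defined by $\widehat{T_\rho G}(S) = \rho^{|S|} \widehat{G}(S)$. The hypercontractive inequality states: for every $1 \le p \le 2$, every $G : \{-1,1\}^n \to \mathbb{R}$, and $\rho = \sqrt{p-1}$, one has $\|T_\rho G\|_2 \le \|G\|_p$. I will treat this as the single substantive tool; its standard proof proceeds via a one-variable calculation (Bonami's two-point inequality) and then tensorization across the $n$ coordinates, and is independent of the statement I am trying to prove.

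First, I would exploit the fact that $F$ takes values in $\{0,1\}$, so $F(x)^p = F(x)$ pointwise for every $p > 0$. Hence $\|F\|_p = (\E[F])^{1/p} = \alpha^{1/p}$. Second, I would lower bound the $L_2$ side of hypercontractivity by keeping only the level-$k$ contribution from Parseval:
\[
\|T_\rho F\|_2^2 \;=\; \sum_{S \subseteq [n]} \rho^{2|S|} \widehat{F}(S)^2 \;\ge\; \rho^{2k} \sum_{|S| = k} \widehat{F}(S)^2.
\]
Combining with the squared hypercontractive inequality yields
\[
\sum_{|S| = k} \widehat{F}(S)^2 \;\le\; (p-1)^{-k}\,\alpha^{2/p}.
\]

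Third, I would optimize over $p$ by setting $p := 1 + \frac{k}{2\ln(1/\alpha)}$. The hypothesis $k \le 2\ln(1/\alpha)$ is used precisely here, to guarantee $p \in [1,2]$ so that hypercontractivity is available. With this choice, $(p-1)^{-k} = \bigl(\frac{2\ln(1/\alpha)}{k}\bigr)^{k}$. For the $\alpha^{2/p}$ factor, write
\[
\alpha^{2/p} \;=\; \alpha^2 \cdot \alpha^{-2(p-1)/p} \;=\; \alpha^2 \cdot \exp\!\left(\frac{2(p-1)\ln(1/\alpha)}{p}\right),
\]
and observe that by the choice of $p$, $2(p-1)\ln(1/\alpha) = k$, so $\alpha^{2/p} = \alpha^2 \cdot e^{k/p} \le \alpha^2 \cdot e^k$ since $p \ge 1$. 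Multiplying the two factors produces the target bound $\alpha^2 \bigl(\frac{2e \ln(1/\alpha)}{k}\bigr)^k$.

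The main obstacle is the hypercontractive inequality itself, which is where all the analytic content lies; the remainder is a two-line algebraic optimization. I would simply cite the hypercontractive inequality from O'Donnell's text (the same reference used for the statement of the Level-$k$ Inequalities), noting that nothing else in the argument depends on any deep Fourier-analytic fact beyond Parseval's identity and the indicator identity $F^p = F$.
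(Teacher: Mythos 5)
Your proof is correct and is essentially the argument the paper relies on: the paper does not prove Lemma \ref{levelkinequality} itself but cites O'Donnell's text, and the proof there is exactly this one --- apply $(p,2)$-hypercontractivity with $\rho=\sqrt{p-1}$, use $\|F\|_p^2=\alpha^{2/p}$ for a $\{0,1\}$-valued $F$, drop all Fourier levels except $k$, and optimize $p=1+\frac{k}{2\ln(1/\alpha)}$, with the hypothesis $k\le 2\ln(1/\alpha)$ guaranteeing $p\le 2$. Your algebra (including $(p-1)^{-k}=\bigl(\tfrac{2\ln(1/\alpha)}{k}\bigr)^k$ and $\alpha^{2/p}\le \alpha^2 e^k$) checks out, so nothing further is needed.
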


We now show the desired bound on $L_2(H)$. Since $C$ is a deterministic protocol of cost at most $c$, it induces a partition of the input space $\{-1,1\}^{2N}\times \{-1,1\}^{2N}$ into at most $2^c$  rectangles. Let $\mathcal{P}$ be this partition and let $A\times B$ index rectangles in $\mathcal{P}$, where $A$ (respectively $B$) is the set of Alice's (respectively Bob's) inputs compatible with the rectangle. Let $C(A\times B)\in\{-1,1\}$ be the output of the protocol on inputs from a rectangle $A\times B\in \mathcal{P}$. For all $x,y\in \{-1, 1\}^{2N}$, we have
\[ C(x,y) = \underset{A\times B \in \mathcal{P}}{\sum} C(A\times B) \mathbbm{1}_{A}(x)  \mathbbm{1}_B(y)\]
By definition, $D(x,z)=C(x,x\cdot z)$. This implies that
\[ D(x,z) = \underset{A\times B \in \mathcal{P}}{\sum} C(A\times B) \mathbbm{1}_{A}(x)  \mathbbm{1}_B(x\cdot z)\]
Taking an expectation over $x\sim U_{2N}$ of the above identity implies that
\[ H(z)\triangleq \underset{x\sim U_{2N}}{\E} [D(x,z)]= \sum_{A\times B\in \mathcal{P}}C(A\times B) \big(\mathbbm{1}_A*\mathbbm{1}_B\big)(z) \]
This implies that for any $S\subseteq [n]$, we have
\[\widehat{H}(S)= \sum_{A\times B\in \mathcal{P}}C(A\times B) \widehat{\mathbbm{1}_A*\mathbbm{1}_B}(S)=\sum_{A\times B\in \mathcal{P}}C(A\times B) \widehat{\mathbbm{1}_A}(S)\widehat{\mathbbm{1}_B}(S)\]
We thus obtain
\begin{align*}\begin{split}
L_2(H)&=\sum_{|S|=2}\left| \widehat{H}(S) \right| \\
&= \sum_{|S|=2}\left| \underset{A\times B\in \mathcal{P}}{\sum} C(A\times B)\widehat{\mathbbm{1}_A}(S)\widehat{\mathbbm{1}_B}(S)\right| \\ 
&\le \underset{A\times B\in \mathcal{P}}{\sum} \sum_{|S|=2}  |\widehat{\mathbbm{1}_A}(S)| |\widehat{\mathbbm{1}_B}(S)|\\
\end{split}\end{align*}
We apply Cauchy Schwarz to the term $\sum_{|S|=2}  |\widehat{\mathbbm{1}_A}(S)| |\widehat{\mathbbm{1}_B}(S)|$ to obtain
\[ L_2(H) \le  \underset{A\times B\in \mathcal{P}}{\sum} \Big( \sum_{|S|=2} \widehat{\mathbbm{1}_A}(S)^2 \Big)^{1/2}\Big( \sum_{|S|=2} \widehat{\mathbbm{1}_B}(S)^2 \Big)^{1/2} \]
For ease of notation, let $\mu(A)=\frac{|A|}{2^{2N}}$ denote the measure of a set $A\subseteq \{-1, 1\}^{2N}$ under $U_{2N}$. We first ensure that for each rectangle $A\times B\in \mathcal{P}$, we have $\mu(A)\le \frac{1}{e}$ and $\mu(B)\le \frac{1}{e}$. We may do this by adding 2 extra bits of communication for each player. For $k=2$, we have $k= 2\ln(e)\le 2\ln\frac{1}{\mu(A)}$ and $k\le 2\ln\frac{1}{\mu(B)}$. We apply Lemma \ref{levelkinequality} on the indicator functions $\mathbbm{1}_A$ and $\mathbbm{1}_B$ for $k=2$ to obtain
\[\sum_{|S|=2} \left(\widehat{\mathbbm{1}_A}(S)\right)^2 \le \mu(A)^2\Big(e \ln(1/\mu(A))\Big)^2 \quad\text{and}\quad \sum_{|S|=2} \left(\widehat{\mathbbm{1}_B}(S)\right)^2 \le \mu(B)^2\Big(e \ln(1/\mu(B))\Big)^2 \] 
Substituting this in the bound for $L_2(H)$, we have
\[ L_2(H) \le e^2 \underset{A\times B\in \mathcal{P}}{\sum} \mu(A)\mu(B)\ln\frac{1}{\mu(A)}\ln\frac{1}{\mu(B)} \]
Let $\Delta:= e^2\underset{A\times B\in \mathcal{P}}{\sum} \mu(A)\mu(B)\ln\frac{1}{\mu(A)}\ln\frac{1}{\mu(B)}$ be the expression in the R.H.S. of the above. Note that it suffices to upper bound $\Delta$. Consider the case when $\mathcal{P}$ consists of $2^c$ rectangles $A\times B$, each of which satisfies $\mu(A)=\mu(B)=\frac{1}{2^{c/2}}$. In this case, $\Delta$ evaluates to $e^2\sum_{A\times B\in \mathcal{P}} \frac{1}{2^c}( \frac{c\ln2}{2})^2 =O(c^2)$. This proves the lemma in this special case. A similar bound holds for the general case and the proof follows from a concavity argument that we describe now. 

\begin{figure}
\centering
\begin{tikzpicture}[scale=0.7]
    \begin{axis}[
            axis lines=middle,
            xmin=0,xmax=1,ymin=0,ymax=0.6,
            xlabel=$x$,
            ylabel=$y$,
    y label style={at={(axis description cs:0.05,1.0)}},
    x label style={at={(axis description cs:1.0,0.05)}},
            ]
      \addplot[domain=0:1,thick,samples=100] {x*ln(1/x)^2};
    \end{axis}
  \end{tikzpicture}
\caption{Plot of the function $y=x\left(\ln\frac{1}{x}\right)^2$}
\end{figure}
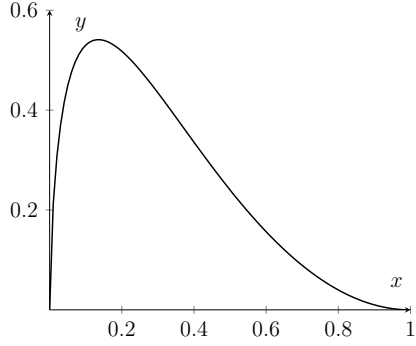

Since $\mu(A),\mu(B)\le 1$, we have the following inequality.
\begin{align*}\begin{split}
\Delta &\triangleq e^2\underset{A\times B\in \mathcal{P}}{\sum} \mu(A)\mu(B)\ln\frac{1}{\mu(A)}\ln\frac{1}{\mu(B)}\\
&\le e^2\underset{A\times B\in \mathcal{P}}{\sum} \mu(A)\mu(B)\ln\frac{1}{\mu(A)\mu(B)}\ln\frac{1}{\mu(A)\mu(B)}\\
&=e^2\underset{A\times B\in \mathcal{P}}{\sum} \mu(A\times B)\left(\ln\frac{1}{\mu(A\times B)}\right)^2
\end{split}\end{align*}

Let $f:[0,\infty)\rightarrow \mathbb{R}$ be defined by $f(p):=p\ln (1/p)^2$. A small calculation shows that $f$ is a concave function in the interval $[0,0.3]$ (see Figure 7). Let $\alpha_i\in [0,0.3]$ for $i\in [k]$. Jensen's inequality applied to $f$ states that for $i\sim [k]$ drawn uniformly at random, we have $\E_i [f(\alpha_i)]\le f(\E_i [\alpha_i])$. This implies that
\[ \sum_{i=1}^k \alpha_i  \ln(1/\alpha_i)^2 \le \left(\sum_{i=1}^k\alpha_i\right)\ln\left(\frac{k}{\sum_{i=1}^k \alpha_i}\right)^2 \]
We apply this inequality to the terms in $\Delta$ by substituting $\alpha_i$ with $\mu(A\times B)$. We may do this since the assumption that $\mu(A),\mu(B)\le \frac{1}{e}$ implies that $\mu(A\times B)\le \frac{1}{e^2}\le 0.3$. This implies that
\[ \Delta \le e^2 \left( \sum_{A\times B\in \mathcal{P}}\mu(A\times B)\right) \ln\left( \frac{2^{c+4}}{\sum_{A\times B\in \mathcal{P}} \mu(A\times B)} \right)^2  \] 
Since $\sum_{A\times B\in \mathcal{P}}\mu(A\times B)=1$, we have
\[\Delta \le e^2(c+4)^2 (\ln 2)^2 \le 120c^2  \]
This completes the proof of Claim \ref{weightbound}.\end{proof}

We now show that an analogue of Lemma \ref{mainlemma1} holds for restricted protocols, similarly to Claim 7.3 in ~\cite{raztal}.
\begin{lemma} \label{mainlemma2} Let $p\le \frac{1}{4N}$ and $C(x,y)$ be any deterministic protocol of cost $c\ge 1$ for the forrelation problem. As before, let $D(x, z):\mathbb{R}^{2N}\times \mathbb{R}^{2N}\rightarrow \mathbb{R}$ refer to the multilinear extension of $C(x,x\cdot z)$. Let $z_0\in [-1/2,1/2]^{2N}$.  Then,
\[ \left| \underset{\substack{z\sim p \mathcal{G}\\x\sim U_{2N}}}{\E} [D(x, z_0+  z)] - \underset{{z,x\sim U_{2N}}}{ \E}[D(x,z_0+ z)] \right| \le  \frac{120\epsilon c^2 (2p)^2}{\sqrt{N}} +(2p)^4N^3  \]
\end{lemma}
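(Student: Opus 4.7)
The plan is to mimic the proof of Lemma \ref{mainlemma1} with the shifted multilinear polynomial $H'(z) := H(z_0 + z)$ in place of $H$. Since $H(z) = \sum_T \hat H(T) \chi_T(z)$ is multilinear, substituting $z_i \mapsto z_0(i)+z_i$ preserves multilinearity in $z$, so $H'$ has a genuine Fourier expansion $H'(z) = \sum_U \hat{H'}(U)\chi_U(z)$. First I would note, as in the derivation of Corollary \ref{maincorollary1}, that multilinearity of $D$ in $x$ gives $\E_x[D(x,z_0+z)] = H(z_0+z) = H'(z)$, and multilinearity of $H'$ in $z$ gives $\E_{z \sim U_{2N}}[H'(z)] = H'(0) = H(z_0)$. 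Thus the quantity to be bounded is exactly $\Delta' := |\E_{z \sim p\mathcal{G}}[H'(z)] - H'(0)|$, which is the same shape as in Corollary \ref{maincorollary1} but with $H'$ in place of $H$.

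Next I would carry out the same Fourier expansion step, using equations (\ref{eqnnum2}) and (\ref{eqnnum3}) on the moments of $p\mathcal{G}$. This yields
\[ \Delta' \le \sum_{k \ge 1} L_{2k}(H') \cdot p^{2k}\epsilon^k k! N^{-k/2}. \]
The key new computation is expressing $\hat{H'}(U)$ in terms of $\hat H$: by multilinearity,
$ \hat{H'}(U) = \sum_{T \supseteq U} \hat H(T) \chi_{T\setminus U}(z_0), $
and since $|z_0(i)| \le 1/2$ we obtain
\[ L_{2k}(H') \;\le\; \sum_{j \ge 2k} (1/2)^{j-2k} \binom{j}{2k} L_j(H). \]
The algebraic identity $(1/2)^{j-2k} p^{2k} = (1/2)^j (2p)^{2k}$ is what converts the factor $p$ into $2p$ throughout the bound, explaining why the RHS of Lemma \ref{mainlemma2} is obtained from Lemma \ref{mainlemma1}'s bound by simply replacing $p$ with $2p$.

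For the dominant $k=1$ contribution I would aim to bound $\sum_{j\ge 2}(1/2)^j\binom{j}{2}L_j(H)$ by a constant times $c^2$. For $j=2$ this is just a rescaled version of Claim \ref{weightbound}. For $j \ge 3$ the bound is obtained by redoing the proof of Claim \ref{weightbound} at level $j$: Cauchy-Schwarz plus the Level-$j$ inequality (Lemma \ref{levelkinequality}) applied to each rectangle indicator, followed by the AM-GM step to combine $\ln(1/\mu(A))$ and $\ln(1/\mu(B))$ into $\ln(1/\mu(A\times B))$, and finally the same concavity/Jensen argument on $x\ln(1/x)^j$. The geometric factor $(1/2)^j$ damps out the contributions of large $j$, so the sum telescopes to $O(c^2)$ with a small explicit constant, giving the first term $\tfrac{120\epsilon c^2 (2p)^2}{\sqrt N}$. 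The $k \ge 2$ tail is handled exactly as in Lemma \ref{mainlemma1}, using the trivial level bounds together with the hypothesis $2p \le 1/(2N)$ (equivalently $p \le 1/(4N)$) to obtain successive geometric decay, leaving a total of at most $(2p)^4 N^3$.

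The main obstacle is the level-$j$ control on $L_j(H)$ for $j \ge 3$: the Level-$k$ inequality as stated requires $j \le 2\ln(1/\mu(A))$, so for rectangles with $\mu(A)$ close to the cutoff $1/e$ enforced at the start of the Claim \ref{weightbound} proof, the inequality is only directly available at level $2$. Beyond that range one has to fall back on Parseval-type bounds $\sum_{|S|=j}\hat{\mathbbm{1}_A}(S)^2 \le \mu(A)$, and combine the two regimes carefully so that the resulting single-rectangle contribution remains of order $\mu(A\times B)\ln(1/\mu(A\times B))^2$ after summing over $j$. Once this is done, the concavity argument from Claim \ref{weightbound} finishes the job, and the rest of the proof is a routine adaptation of Lemma \ref{mainlemma1}.
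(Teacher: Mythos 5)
Your reduction of the shifted function to the Fourier levels of $H$ is correct as far as it goes ($\widehat{H'}(U)=\sum_{T\supseteq U}\widehat H(T)\chi_{T\setminus U}(z_0)$, hence $L_{2k}(H')\le\sum_{j\ge 2k}(1/2)^{j-2k}\binom{j}{2k}L_j(H)$, and the bookkeeping that turns $p$ into $2p$), but the step you flag as the "main obstacle" is in fact a fatal one: the quantity $\sum_{j\ge 2}(1/2)^{j}\binom{j}{2}L_j(H)$ is \emph{not} $O(c^2)$ for cost-$c$ protocols. Take a partition into $2^c$ rectangles where each $A$ and $B$ is a subcube fixing the same $c/2$ coordinates; then $L_j(\mathbbm{1}_A*\mathbbm{1}_B)=\binom{c/2}{j}2^{-c}$, so $L_j(H)$ can be as large as $\binom{c/2}{j}$ and $\sum_j (1/2)^j L_j(H)\approx (3/2)^{c/2}$. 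Neither of your proposed repairs avoids this: the Parseval fallback gives only $\sum_{|S|=j}|\widehat{\mathbbm{1}_A}(S)||\widehat{\mathbbm{1}_B}(S)|\le\sqrt{\mu(A\times B)}$ per rectangle, which sums over $2^c$ rectangles to $2^{c/2}$; and pushing the Level-$k$ inequality to level $j$ gives per-level bounds of order $(O(c)/j)^j$, whose damped sum over $j$ is again $2^{\Omega(c)}$ (maximized near $j=\Theta(c)$). The $1/2$-per-coordinate damping from $z_0$ is simply not strong enough to beat the growth of the higher Fourier levels, so this route cannot recover the $O(c^2)$ bound.

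The paper takes a different and essentially unavoidable route: a random restriction argument. One samples $\rho\sim R_{z_0}$ fixing each coordinate $i$ to $\mathrm{sign}(z_0(i))$ with probability $|z_0(i)|$, so that $\E_\rho[\rho(z)]=z_0+P^{-1}\cdot z$ with $P_i=\tfrac{1}{1-|z_0(i)|}\in[1,2]$, and writes $D(x,z_0+z)=\E_{\rho,u}[D(\rho^u(x),\rho(P\cdot z))]$. The crucial point is that each $D(\rho^u(x),\rho(z))$ is the multilinear extension of a \emph{restricted protocol} $C\circ\rho^u$ of cost at most $c$, so Lemma~\ref{mainlemma1} (which is deliberately stated for a vector $P\in[-p,p]^{2N}$ precisely for this purpose, here with $pP\in[-2p,2p]^{2N}$) applies to each restriction separately, and the triangle inequality over $\rho,u$ preserves the $120c^2$ bound. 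In other words, the class of cost-$c$ protocols is closed under restrictions, and one must exploit that closure at the level of the \emph{function class} rather than trying to propagate the shift through the Fourier expansion of a single $H$. If you want to salvage your write-up, replace the $L_j(H)$ analysis with this restriction step; the rest of your outline (the reduction to Corollary~\ref{maincorollary1}-type quantities and the treatment of the $k\ge 2$ tail) then becomes unnecessary, as everything is inherited from Lemma~\ref{mainlemma1}.
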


\begin{corollary} \label{maincorollary2} Under the same hypothesis as in Lemma \ref{mainlemma2},
\[ \Big| \E_{z\sim p \mathcal{G}}[D(0,z_0+ z)] - D(0,z_0)  \Big|  \le \frac{  120\epsilon c^2 (2p)^2 }{\sqrt{N}} +(2p)^4N^3 \]
\end{corollary}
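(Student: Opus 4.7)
The plan is to derive Corollary \ref{maincorollary2} from Lemma \ref{mainlemma2} by exactly the same argument that derives Corollary \ref{maincorollary1} from Lemma \ref{mainlemma1}. The essential observation is that multilinearity of $D(x,z)$ in every coordinate of $x$ and every coordinate of $z$ is preserved under an additive shift in $z$ by a fixed vector $z_0$: substituting $w_i = z_0(i) + z_i$ into a polynomial that is affine in each $w_i$ yields a polynomial that is affine in each $z_i$, so $z \mapsto D(0, z_0 + z)$ is again multilinear as a function from $\mathbb{R}^{2N}$ to $\mathbb{R}$. This is the only structural point that needs to be verified before the identities from the proof of Corollary \ref{maincorollary1} can be reused.

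With multilinearity of the shifted polynomial in hand, I would invoke the standard fact that the uniform expectation of a multilinear polynomial over $\{-1,1\}^n$ equals its value at the origin (only the empty-set Fourier coefficient survives). Applied to the $x$-variable alone, this gives, for every $z \in \mathbb{R}^{2N}$,
\[ \underset{x \sim U_{2N}}{\E}\bigl[D(x, z_0 + z)\bigr] = D(0, z_0 + z), \]
and hence, integrating against $p\mathcal{G}$ on $z$,
\[ \underset{\substack{z \sim p\mathcal{G} \\ x \sim U_{2N}}}{\E}\bigl[D(x, z_0 + z)\bigr] = \underset{z \sim p\mathcal{G}}{\E}\bigl[D(0, z_0 + z)\bigr]. \]
Applying the same fact to the $z$-variable of the (now multilinear) polynomial $z \mapsto D(0, z_0 + z)$ yields
\[ \underset{z, x \sim U_{2N}}{\E}\bigl[D(x, z_0 + z)\bigr] = \underset{z \sim U_{2N}}{\E}\bigl[D(0, z_0 + z)\bigr] = D(0, z_0). \]

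Substituting these two identities into the bound supplied by Lemma \ref{mainlemma2} turns its left-hand side into $\bigl|\E_{z \sim p\mathcal{G}}[D(0, z_0 + z)] - D(0, z_0)\bigr|$, which is precisely the quantity bounded in the statement of Corollary \ref{maincorollary2}, and the claimed inequality follows. No new quantitative estimates are required: all the analytic work is already packaged inside Lemma \ref{mainlemma2}, and the only (essentially trivial) obstacle is the verification that shifting by $z_0$ does not destroy multilinearity in $z$, which it plainly does not.
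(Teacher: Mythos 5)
Your proposal is correct and matches the paper's own derivation: both reduce the corollary to Lemma \ref{mainlemma2} by using that $\E_{x\sim U_{2N}}[D(x,z_0+z)]=D(0,z_0+z)$ and $\E_{z\sim U_{2N}}[D(0,z_0+z)]=D(0,z_0)$, the latter resting on the (correct) observation that shifting by $z_0$ preserves multilinearity in $z$. No further comment is needed.
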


\begin{proof}[Proof of Corollary \ref{maincorollary2} from Lemma \ref{mainlemma2}]
Since $D(x,z)$ is a multilinear polynomial, for all $z\in \mathbb{R}^{2N}$, we have $\E_{x\sim U_{2N}}[D(x,z)]=D(0,z)$. This implies that for all $z_0\in \mathbb{R}^{2N}$,
\[ \underset{\substack{z\sim p \mathcal{G}\\x\sim U_{2N}}}{\E}[D(x, z_0+z)] = \E_{z\sim p \mathcal{G}}[ D(0, z_0+z)]\]
For all $z_0\in \mathbb{R}^{2N}$, since $ \E_{z\sim U_{2N}}[D(0,z_0+z)]=D(0,z_0) $, we have 
\[ \underset{z,x\sim U_{2N}}{\E}[D(x,z_0+z)] = D(0,z_0)  \]
The proof of Corollary \ref{maincorollary2} follows from the above two equalities and Lemma \ref{mainlemma2}.
 \end{proof}
 
\begin{proof}[Proof of Lemma \ref{mainlemma2}] Similarly to the approach of~\cite{chhl,raztal}, we will express $D(x,z_0+z)$ as the average output of restricted protocols $(C\circ\rho)(x,x\cdot z)$, on which we can use Lemma \ref{mainlemma1} to derive the result. These restricted protocols roughly correspond to Alice and Bob fixing a common subset $I\subseteq[2N]$ of their inputs in a predetermined way and then running the original protocol. We formalize this now. 

A restriction $\rho$ of $\mathbb{R}^{2N}$ is an element of $\{-1, 1, *\}^{2N}$. It defines an action $\rho:\mathbb{R}^{2N}\rightarrow \mathbb{R}^{2N}$ in the following natural way. For any $z\in \mathbb{R}^{2N}$ and $i\in [2N]$,
\[ (\rho(z))(i): = \begin{cases} \rho(i) &  \text{ if } \rho(i) \in \{-1, 1\} \\
 z(i) & \text{ otherwise } \end{cases}\] 

Let $sign:(\mathbb{R}\setminus 0)\rightarrow \{-1,1\}$ be the function which maps real numbers to their sign. Given $z_0\in [-1/2,1/2]^{2N}$, let $R_{z_0}$ be a distribution over restrictions of $\mathbb{R}^{2N}$ defined as follows.  For each $i\in [2N]$, independently, set\footnote{If $z_0(i)$ is zero, then $\rho(i)=*$ with probability 1.}: 
\[\rho(i) := \begin{cases} sign(z_0(i)) & \text{ with probability } |z_0(i)| \\
 * &\text{ with probability } 1-|z_0(i)| \end{cases}\] 

Let $P\in \mathbb{R}^{2N}$ be such that $P_i:= \frac{1}{1-|z_0(i)|}$ for every $i\in [2N]$. Note that the assumption of $z_0\in [-1/2,1/2]^{2N}$ ensures that $P$ is a well defined element of $[1,2]^{2N}$. For any $z\in \mathbb{R}^{2N}$ and $i\in [2N]$, the expected value of the $i$th coordinate of $ \rho(z)$ when $\rho\sim R_{z_0}$ can be computed as follows.
\[ \underset{\rho\sim R_{z_0}}{\E} [(\rho(z))(i)] = |z_0(i)| sign(z_0(i))+ (1-|z_0(i)|)z(i) = z_0(i) +\frac{1}{P_i} z(i) \]
This implies that for any fixed $x,z\in \mathbb{R}^{2N}$ and $z_0\in [-1/2,1/2]^{2N}$, since $D$ is a multilinear function, we have
\[ \underset{\rho\sim R_{z_0}}{\E} \left[D(x, \rho(z))\right]= D(x,\underset{\rho\sim R_{z_0}}{\E}[\rho(z)]) =D(x, z_0+P^{-1}\cdot z) \]
Replacing $z$ with $P\cdot z$ in the above equality implies that
\[ \underset{\rho\sim R_{z_0}}{\E} [D(x, \rho(P\cdot z))] =D(x, z_0+ z) \]
This equality allows us to rewrite the L.H.S. of Lemma \ref{mainlemma2} as follows.
\begin{align*}\begin{split}
\Delta:=& \left|  \underset{\substack{z\sim p \mathcal{G},\\x\sim U_{2N}}}{\E} [D(x,z_0+ z)] -\underset{z,x\sim U_{2N}}{ \E} [D(x,z_0+ z)]  \right|  \\
=& \left| \underset{\substack{z\sim pP\cdot  \mathcal{G},\\x\sim U_{2N}}}{\E}\underset{\rho\sim R_{z_0}}{\E}  [D(x, \rho(z))] - \underset{\substack{z\sim P\cdot U_{2N},\\x\sim U_{2N}}}{ \E} \underset{\rho\sim R_{z_0}}{\E}  [D(x, \rho(z))] \right|   \\
=& \left| \underset{\rho\sim R_{z_0}}{\E} \left[ \underset{\substack{z\sim pP\cdot  \mathcal{G},\\x\sim U_{2N}}}{\E} [D(x, \rho(z))] - \underset{\substack{z\sim P\cdot U_{2N},\\x\sim U_{2N}}}{ \E} [D(x, \rho(z))] \right] \right|   \\
\end{split}\end{align*}
For a multilinear polynomial, its expectation over a product distribution depends only on the mean of that distribution. This allows us to replace the expectation of $D(x,\rho(z))$ over $z\sim P\cdot U_{2N}$ by an expectation over $z\sim U_{2N}$. We thus obtain
\begin{equation}\label{eqnnum5}
\Delta= \left| \underset{\rho\sim R_{z_0}}{\E} \left[ \underset{\substack{z\sim pP\cdot  \mathcal{G},\\x\sim U_{2N}}}{\E} [D(x, \rho(z))] - \underset{\substack{z\sim U_{2N},\\x\sim U_{2N}}}{ \E} [D(x, \rho(z))] \right] \right|     \end{equation}
For any $\rho\in \{-1,1,*\}^{2N}$ and $u\in \{-1,1\}^{2N}$, we define a substitution $\rho^u:\mathbb{R}^{2N}\rightarrow \mathbb{R}^{2N}$ obtained from $\rho$ and $u$ as follows. For any $x\in \mathbb{R}^{2N}$ and $i\in [2N]$,
\[ (\rho^u(x))(i): = \begin{cases} u(i) &  \text{ if } \rho(i) \in \{-1, 1\} \\
 x(i) & \text{ otherwise } \end{cases}\] 
This is an action on $\mathbb{R}^{2N}$ which replaces the values of coordinates specified by $\rho$, with values from $u$. For every fixed $\rho$, as we vary over $x,u\sim U_{2N}$ the distribution of $\rho^u(x)$ is exactly $U_{2N}$. This implies that for all $z\in \mathbb{R}^{2N}, \rho\in \{-1,1,*\}^{2N}$,
\[\label{equality3} \underset{x\sim U_{2N}}{\E} [ D(x,\rho(z))]= \underset{x,u\sim U_{2N}}{\E} [D( \rho^u (x),\rho(z) )]\]
Substituting this in equation (\ref{eqnnum5}), we have
\[\Delta= \left|  \underset{\rho\sim R_{z_0}}{\E}  \underset{u\sim U_{2N}}{\E}  \left[ \underset{\substack{z\sim pP\cdot  \mathcal{G},\\x\sim U_{2N}}}{\E}[D(\rho^u(x), \rho(z))] -\underset{{z,x\sim U_{2N}}}{ \E}   [D(\rho^u(x), \rho(z))] \right] \right| \]
Applying Triangle Inequality on the above, we have
\begin{equation} \label{eqnnum6}\Delta \le  \underset{\rho\sim R_{z_0}}{\E}\underset{u\sim U_{2N}}{\E}  \left| \underset{\substack{z\sim pP\cdot  \mathcal{G},\\x\sim U_{2N}}}{\E}[D(\rho^u(x), \rho(z))] -\underset{{z,x\sim U_{2N}}}{ \E}[D(\rho^u(x), \rho(z))]\right|\quad \end{equation}

Fix any $\rho\in \{-1, 1,*\}^{2N}$ and $u\in\{-1,1\}^{2N}$. For every $x,z\in \{-1, 1\}^{2N},$ we have $D(x,z)=C(x,x\cdot z)$, furthermore, $\rho^u(x),\rho(z)\in \{-1,1\}^{2N}$. This implies that for every $x,z\in \{-1, 1\}^{2N}$,
\begin{equation}\label{eqnnum7} D( \rho^u (x),\rho(z) ) = C( \rho^u (x), \rho^u (x)\cdot \rho(z))  \end{equation}
This prompts us to define a communication protocol $C\circ \rho^u$ where Alice and Bob first restrict their inputs and then run the original protocol $C$. The restriction is that for each coordinate $i\in[2N]$ with $\rho_i\in \{-1, 1\},$ Alice overwrites her input $x_i$ with $u_i$ while Bob overwrites his input $y_i$ with $\rho_i u_i$. The main property of this restricted protocol is that for all $x,z\in \{-1, 1\}^{2N}$,
\[ (C\circ \rho^u)( x,x\cdot z) =C( \rho^u (x), \rho^u (x)\cdot \rho(z)) \]
This, along with equation (\ref{eqnnum7}) implies that $D(\rho^u(x),\rho(z))$ is the unique multilinear extension of $(C\circ\rho^u)(x,x\cdot z)$. The cost of $C\circ \rho^u$ is at most that of $C$ since Alice and Bob don't need to communicate to restrict their inputs.  We now use Lemma \ref{mainlemma1} on $C\circ\rho^u$ to argue that $pP\cdot \mathcal{G}$ fools $\underset{x\sim U_{2N}}{\E}[D(\rho^u(x),\rho( z))]$. The conditions of the lemma are satisfied since $pP\in [-2p,2p]^{2N}$, $p\le \frac{1}{4N}$, and $C\circ \rho^u$ is a protocol of cost at most $c$ and whose multilinear extension is $D(\rho^u(x),\rho(z))$. The lemma implies that
\[ \left|   \underset{\substack{z\sim pP\cdot \mathcal{G},\\x\sim U_{2N}}}{\E} [D(\rho^u(x),\rho(z))] - \underset{\substack{z\sim U_{2N},\\x\sim U_{2N}}}{\E}[D(\rho^u(x),\rho(z))]   \right| \le \frac{120\epsilon c^2(2p)^2}{\sqrt{N}}+ (2p)^4N^3  \] 
Substituting this in inequality $(\ref{eqnnum6})$ completes the proof of Lemma \ref{mainlemma2}.
\end{proof}

\begin{proof}[Proof of Theorem \ref{theorem1}]
Since $D(x,z)$ is the multilinear extension of $C(x,x\cdot z)$ and since $\mathcal{D}$ and $U_{2N}$ are distributions over $\{-1,1\}^{2N}$, we have 
\[ \E_{x\sim U_{2N},z\sim \mathcal{D}}[C(x,x\cdot z)]=\E_{x\sim U_{2N},z\sim \mathcal{D}}[D(x,z)]=\E_{z\sim \mathcal{D}}[D(0,z)]\]
When $x\sim U_{2N}$ and $y\sim U_{2N}$ are independently sampled, the distribution of $(x,x\cdot y)$ is $U_{4N}$. This implies that 
\[ \E_{x,y\sim U_{2N}}[C(x,y)]=\E_{x,y\sim U_{2N}}[D(x,x\cdot y)]=D(0,0)\]
The above two equations allow us to rewrite the quantity in the L.H.S. of Theorem \ref{theorem1} as follows.
\[ \Delta :=\left|\underset{\substack{x\sim U_{2N}\\z\sim \mathcal{D}}}{\E}[C(x,x\cdot z)]-\underset{x,y\sim U_{2N}}{\E}[C(x,y)] \right| = \Big| \E_{z\sim \mathcal{D}} [D(0,z)] - D(0,0) \Big| \]

\sloppy Claim \ref{claim1} applied on the multilinear polynomial $D$ implies that $ \E_{z\sim \mathcal{D}}[D(0,z)]=\E_{z\sim \mathcal{G}}[D(0,trnc(z))]$. Substituting this in the above equality implies that
\[ \Delta=  \Big| \E_{z\sim \mathcal{G}}[D(0,trnc(z))]  - D(0,0) \Big|\]
Let $t=16N^4,p=\frac{1}{\sqrt{t}}=\frac{1}{4N^2}$. Let $z^{(1)},\ldots,z^{(t)} \sim \mathcal{G}$ be independent samples and let $Z$ refer to this collection of random variables. For $i\in [t]$, define $z^{\le (i)}:=p(z^{(1)}+\ldots +z^{(i)})$. By convention, $z^{\le (0)}:=0$. Note that for $i\in[t]$, $z^{\le(i)}$ has a Gaussian distribution with mean 0 and covariance matrix as $p^2i$ times that of $\mathcal{G}$. Thus, $z^{\le(t)}$ is sampled according to $\mathcal{G}$. Substituting this in the previous equality implies that
\[ \Delta= \left|  \E_Z[D(0,trnc(z^{\le t}))]  - D(0,0) \right|  \]
To bound the above quantity, for each $0\le  i\le t-1$, we show a bound on
\[ \Delta_i:= \Big| \E_Z[D(0,trnc(z^{\le (i+1)}))] -  \E_Z[D(0,trnc(z^{\le (i)})) ] \Big|   \]
Since $z^{\le(0)}=0$, the triangle inequality implies that $ \Delta \le \sum_{i=0}^{t-1} \Delta_i$.

Fix any $i\in\{0,\ldots,t-1\}$. We now bound $\Delta_i$. Let $E_i$ be the event that $z^{\le(i)}\notin[-1/2,1/2]^{2N}$. We first observe that $E_i$ is a low probability event. Since each $z^{\le (i)}(j)$ is distributed as $\mathcal{N}(0,p^2 i\epsilon)$, where $p^2i\le 1 $ and $\epsilon=1/(50\ln N)$, we have
\[ \p[z^{\le (i)}(j)\notin [-1/2,1/2]]\le  \p[ |\mathcal{N}(0,\epsilon)| \ge 1/2 ] \le \exp(-1/8\epsilon)\le \exp(-6\ln N)= \frac{1}{N^6} \]
Applying a Union bound over coordinates $j\in [2N]$, we have for each $0\le i \le t$,
\begin{equation}\label{eqnnum8} \p[E_i] = \p[z^{\le (i)}\notin [-1/2,1/2]^{2N}] \le 2N\frac{1}{N^6} \le \frac{2}{N^5} \end{equation}
When $E_i$ does not occur, we have $trnc(z^{\le (i)})=z^{\le (i)}\in [-1/2,1/2]^{2N}$. For every fixed value of $z^{\le(i)}$ in this range, we apply Corollary \ref{maincorollary2} with parameters $p=\frac{1}{4N^2}, z_0=z^{\le (i)}$ and $z=z^{\le (i+1)}-z^{\le(i)}=pz^{(i+1)}$. Note that the conditions in the hypothesis are satisfied since $z_0\in [-1/2,1/2]^{2N}$, $p\le 1/(4N)$ and the random variable $pz^{(i+1)}$ is distributed as $p\mathcal{G}$. The corollary implies that for every $z^{\le(i)}\in [-1/2,1/2]^{2N}$, 
\[ \Big| \E_Z  \left[D(0,z^{\le (i+1)}) \mid z^{\le(i)} \right]- \E_Z  \left[ D(0,z^{\le (i)}) \mid z^{\le(i)} \right] \Big| \le   \frac{120 \epsilon c^2(2p)^2}{N^{1/2}} +(2p)^4N^3  \]
Since $\neg E_i$ implies that $z^{\le(i)}\in [-1/2,1/2]^{2N}$, we have
\[ \Big| \E_Z  \left[D(0,z^{\le (i+1)}) \mid \neg E_i \right]- \E_Z  \left[ D(0,z^{\le (i)})\mid \neg E_i \right] \Big| \le   \frac{120 \epsilon c^2 (2p)^2}{N^{1/2}} +(2p)^4N^3  \]
We apply Claim \ref{claim3} on the multilinear polynomial $D(0,z):[-1,1]^{2N}\rightarrow [-1,1]$ with the parameters $p=\frac{1}{4N^2},  z_0=z^{\le(i)}$ and $z=z^{(i+1)}$. Note that the conditions are satisfied since $z_0\in[1/2,1/2]^{2N}$ and $p\le \frac{1}{2}$. The claim implies that
\[ \Big| \E_Z\left[D(0,z^{\le (i+1)}) \mid \neg E_i\right] - \E_Z\left[D(0,trnc(z^{\le (i+1)})) \mid \neg E_i\right]  \Big| \le \frac{8}{N^5} \]
The previous two inequalities, along with the triangle inequality, imply that
\begin{equation}\label{eqnnum9} \Big| \E_Z\left[D(0,trnc(z^{\le (i+1)})) \mid \neg E_i\right] -  \E_Z\left[D(0,z^{\le (i)}) \mid \neg E_i\right]  \Big|  \le   \frac{120 \epsilon c^2 (2p)^2}{N^{1/2}} +(2p)^4N^3  + \frac{8}{N^5}\end{equation}

Note that for every possible values of $z^{\le(i+1)}$ and $z^{\le(i)}$, the difference $D(0,trnc(z^{\le (i+1)})) -  D(0,trnc(z^{\le (i)}))$ is bounded in magnitude by 2, since $D(0,trnc(z))$ maps $\mathbb{R}^{2N}$ to $[-1,1]$. This implies that
\[   \Big| \E_Z\left[D(0,trnc(z^{\le (i+1)})) \mid E_i\right] -  \E_Z\left[D(0,trnc(z^{\le (i)})) \mid E_i\right]  \Big| \le 2   \]
Thus, we have
\begin{align*}\begin{split}
\Delta_i & \le  \p[\neg E_i] \cdot \Big| \E_Z[D(0,trnc(z^{\le (i+1)})) \mid \neg E_i] -  \E_Z[D(0,trnc(z^{\le (i)})) \mid \neg E_i]  \Big| \\
& + \p[E_i]\cdot \Big| \E_Z[D(0,trnc(z^{\le (i+1)})) \mid E_i] -  \E_Z[D(0,trnc(z^{\le (i)}) )\mid E_i]  \Big|  \\
&\le  \Big| \E_Z[D(0,trnc(z^{\le (i+1)})) \mid \neg E_i] -  \E_Z[D(0,trnc(z^{\le (i)})) \mid \neg E_i]  \Big|  + 2\p[E_i]\\
&=  \Big| \E_Z[D(0,trnc(z^{\le (i+1)})) \mid \neg E_i] -  \E_Z[D(0,z^{\le (i)}) \mid \neg E_i]  \Big|  + 2\p[E_i]\\
& \le  \frac{120\epsilon c^2(2p)^2}{N^{1/2}}  + (2p)^4N^3 + \frac{8}{N^5}  + \frac{4}{N^5}
\end{split}\end{align*} 
The equality in the fourth line follows from the fact that whenever $E_i$ does not occur, $trnc(z^{\le (i)})=z^{\le (i)}$ by definition. The last inequality follows from inequalities (\ref{eqnnum8}) and (\ref{eqnnum9}). Along with the fact that $t=\frac{1}{p^2}=16N^4$, and $\epsilon\le 1$, this implies that
\begin{align*}\begin{split}
\Delta&\le \sum_{i=0}^{t-1} \Delta_i  \\
&\le  t\Big(  \frac{120 \epsilon c^2(2p)^2}{N^{1/2}} +(2p)^4N^3  + \frac{12}{N^5}\Big)\\
&\le \frac{480 \epsilon c^2}{N^{1/2}} + 16p^2 N^3 +\frac{192}{N}\\
&=O\left( \frac{c^2}{N^{1/2}} + \frac{1}{N} \right)\\
&=O\left(\frac{c^2}{N^{1/2}}\right)
\end{split}\end{align*}
The last line follows from the assumption that $c\ge 1$. This completes the proof of Theorem \ref{theorem1}.
\end{proof}

\section{Proof of Theorem \ref{theorem3}: Randomized Lower Bound}
Let $C:\{-1,1\}^{2N}\times \{-1,1\}^{2N} \rightarrow \{-1,1\}$ be a randomized protocol for the forrelation problem with cost at most $c$ and with error at most $1/3$. Consider a randomized protocol $R:\{-1,1\}^{2N}\times \{-1,1\}^{2N}\rightarrow \{-1,1\}$ defined by repeating $C$ independently $O(\ln\ln N)$ times and taking the majority of the outputs. A simple application of Chernoff's bound implies that for \textsc{yes} instances, the majority of outputs of $O(\ln\ln N)$ independent copies of $C$, is $1$ with probability at most $\frac{\epsilon}{32}=O\left(\frac{1}{\ln N}\right)$. Similarly, for \textsc{no} instances, the majority of outputs of $O(\ln\ln N)$ independent copies of $C$, is $-1$ with probability at most $\frac{\epsilon}{32}$. Thus, $R$ solves the forrelation problem with error at most $\epsilon/32$ and is of cost $O(c\ln\ln N)$. Let $D_R$ be the distribution over deterministic protocols defined by $R$. For any $x,y\in\{-1,1\}^{2N}$, let $R(x,y)=\E_{D\sim D_R}[D(x,y)]$ denote the average output of the protocol $R$ on input $(x,y)$. Note that if $(x,y)$ is a \textsc{yes} instance, we have $R(x,y)\le -1+\epsilon/16$, and if $(x,y)$ is a \textsc{no} instance, we have $R(x,y)\ge 1-\epsilon/16$. 

Let $x=(x_1,x_2)$ and $y=(y_1,y_2)$ be Alice's and Bob's inputs to the forrelation problem respectively, where $x_1,x_2,y_1,y_2\in \{-1, 1\}^{N}$. For $i,j\in \{0,1\}^{\log N}$, let $\langle i\mid j\rangle _{\mathbb{F}_2}:=\sum_{k=1}^{\log N} i(k)j(k) \mod 2$. This denotes the inner product between $i$ and $j$ over $\mathbb{F}_2$. Recall the definition of $forr(x\cdot y)$ for $x,y\in\mathbb{R}^{2N}$.
\[ forr(x\cdot y) \triangleq \left\langle   \frac{1}{\sqrt{N}} H_N(x_1\cdot y_1) \Bigg| \frac{1}{\sqrt{N}} x_2\cdot y_2\right\rangle = \frac{1}{N\sqrt{N}}\sum_{i,j\in [N]} (-1)^{\langle i \mid j\rangle_{\mathbb{F}_2}} x_1(i)y_1(i)x_2(j)y_2(j) \] 
We make the following series of observations.
\begin{enumerate}[(1.)]
\item When $x$ and $y$ are drawn independently from $U_{2N}$, the random variable $forr(x\cdot y)$ has mean zero. Furthermore, it is highly concentrated around its mean. This can be seen as follows. The set $\{x_1(i),x_2(i),y_1(i), y_2(i)\}_{i\in [N]}$ is a set of independent $\{-1,1\}$-random variables with mean 0. This implies that the set of products $\{x_1(i)y_1(i)x_2(j)y_2(j)\}_{i, j\in [N]}$ is a set of pairwise independent $\{-1,1\}$-random variables with mean 0. Since $forr(x\cdot y)$ is a weighted sum of $N^2$ variables from this set, its variance can be computed to be at most $\frac{N^2}{(N\sqrt{N})^2}=\frac{1}{N}$. Let $A$ denote the event that $forr(x\cdot y)\le \epsilon/8$. Chebyshev's inequality implies that
\[ \underset{(x,y)\sim U_{4N}}{\p} [\neg A] \le \frac{64}{N\epsilon^2}\]
For $N$ greater than a sufficiently large constant, we have $\frac{64}{N\epsilon^2}\le \frac{\epsilon}{16}$. Thus,
\[ \underset{(x,y)\sim U_{4N}}{\p} [A] \ge 1- \frac{\epsilon}{16} \]
\item For every $x,y\in \{-1,1\}^{2N}$ and deterministic protocol $D\sim D_R$, since $D(x,y)\in\{-1,1\}$, we have 
\[\underset{(x,y)\sim U_{4N}|\neg A}{\E}[R(x,y)]\ge -1 \]
\item Whenever the event $A$ occurs, we have $forr(x\cdot y)\le \epsilon/8$, by definition of $A$. Hence, the distribution $U_{4N}|A$ is a distribution over \textsc{no} instances of the forrelation problem. This implies that
\[ \underset{(x,y)\sim U_{4N}|A}{\E}[R(x,y)]\ge 1-\frac{\epsilon}{16}  \]
\end{enumerate}
These observations allow us to conclude the following.
\begin{align*}\begin{split}
\underset{(x,y)\sim U_{4N}}{\E}[R(x,y)] &= \p[A] \cdot \underset{(x,y)\sim U_{4N}|A}{\E}[R(x,y)] + \p[\neg A]\cdot \underset{(x,y)\sim U_{4N}|\neg A}{\E}[R(x,y)] \\
&\ge \left(1 - \frac{\epsilon}{16}\right)\left(1-\frac{\epsilon}{16}\right) + \left(\frac{\epsilon}{16}\right)\times (-1)  \\
&\ge 1- \frac{3\epsilon}{16} 
\end{split}\end{align*} 

For simplicity of notation, let $V$ be the distribution on $\{-1,1\}^{2N}\times \{-1, 1\}^{2N}$ defined in Theorem \ref{theorem1}. This distribution is obtained by sampling $z\sim \mathcal{D}, x\sim U_{2N}$ and outputting $(x,x\cdot z)$. We make a series of observations analogous to the previous case.
\begin{enumerate}[(1.)]
\item For $(x,y)\sim V$, the distribution of $x\cdot y$ is $\mathcal{D}$. Lemma \ref{raztallemma} applied to $x\cdot y$ implies that 
\[ \underset{(x, y)\sim V}{\E}[forr(x\cdot y)] \ge \frac{\epsilon}{2} \]
Let $B$ denote the event that $forr(x\cdot y)\ge \frac{\epsilon}{4}$. Markov's inequality applied on the $[-1,1]$-random variable $forr(x\cdot y)$ implies that
\[ \underset{(x,y)\sim V}{\p} [B] \ge \frac{\epsilon}{4}  \]
\item For every $x,y\in \{-1,1\}^{2N}$ and deterministic protocol $D\sim D_R$, since $D(x,y)\in\{-1,1\}$, we have 
\[ \underset{(x,y)\sim V|\neg B}{\E}[R(x,y)] \le 1 \]
\item Whenever the event $B$ occurs, we have $forr(x\cdot y)\ge \epsilon/4$ by definition. Hence, the distribution $V|B$ is a distribution over \textsc{yes} instances of the forrelation problem. This implies that
\[ \label{eqnnum22} \underset{(x,y)\sim V|B}{\E}[R(x,y)] \le -1+ \frac{\epsilon}{16} \le0  \]
\end{enumerate}

These observations allow us to conclude the following.
\begin{align*}
\begin{split}
\underset{(x,y)\sim V}{\E}[R(x,y)] &= \p[B] \cdot \underset{(x,y)\sim V|B}{\E}[R(x,y)] + \p[\neg B]\cdot \underset{(x,y)\sim V|\neg B}{\E}R(x,y) \\
&\le 0 + \left(1-\frac{\epsilon}{4}\right)\times (+1) \\
&\le 1 - \frac{\epsilon}{4} 
\end{split}
\end{align*} 

These two conclusions imply that the protocol $R$ distinguishes $V$ and $U_{4N}$ with considerable advantage, that is,
 \begin{align*}\begin{split}
&E_{(x,y)\sim V}[R(x,y)] - \E_{(x,y)\sim U_{4N}}[R(x,y)] \\
&\ge 1-\frac{3\epsilon}{16}-\left(1-\frac{\epsilon}{4}\right) \\
&\ge \frac{\epsilon}{16} 
\end{split}\end{align*} 
Since $R(x,y)=\E_{D\sim D_R}[D(x,y)]$, this implies that 
\[ \E_{(x,y)\sim V} \E_{D\sim D_R}[D(x,y)] - \E_{(x,y)\sim U_{4N}} \E_{D\sim D_R}[D(x,y)] \ge \frac{\epsilon}{16}\]
Fix $D\sim D_R$ such that $ \E_{(x,y)\sim V}[D(x,y)] - \E_{(x,y)\sim U_{4N}} [D(x,y)]$ is at least the R.H.S. of the above. For this deterministic protocol $D$ of cost at most $O(c\ln\ln N)$, we have
 \[\E_{(x,y)\sim V}[D(x,y)] - \E_{(x,y)\sim U_{4N}}[D(x,y)] \ge \frac{\epsilon}{16} \] 
Theorem \ref{theorem1} applied to $D$ implies that $\frac{(c\ln\ln N)^2}{N^{1/2}}\ge \Omega(\epsilon)$. Since $\epsilon=\frac{1}{50\ln N}$, this implies that $c=\tilde{\Omega}(N^{1/4})$. This completes the proof of Theorem \ref{theorem3}.  \qed

\bibliographystyle{alpha}

\begin{thebibliography}{1}

\bibitem[A10]{aaronson10}
Scott Aaronson:
BQP and the polynomial hierarchy. STOC 2010: 141-150

\bibitem[AA15]{aaronsonambainis}
Scott Aaronson and Andris Ambainis:
Forrelation: A Problem That Optimally Separates Quantum from Classical Computing. STOC 2015. 307-316

\bibitem[BCW98]{buhrman}
Harry Buhrman, Richard Cleve, Avi Wigderson:
Quantum vs. Classical Communication and Computation. STOC 1998: 63-68

\bibitem[BJK04]{Bar-YossefJK04}
Ziv Bar-Yossef, T. S. Jayram, Iordanis Kerenidis:
Exponential Separation of Quantum and Classical One-Way Communication Complexity. SIAM J. Comput. 38(1): 366-384 (2008)

\bibitem[CFK+19]{bppip}
Arkadev Chattopadhyay, Yuval Filmus, Sajin Koroth, Or Meir, Toniann Pitassi:
Query-To-Communication Lifting for BPP Using Inner Product. ICALP 2019: 35:1-35:15


\bibitem[M18]{cos521}
Christopher Musco: Lecture Notes in Advanced Algorithm Design: Concentration Bounds, Available at  \url{https://www.cs.princeton.edu/courses/archive/fall18/cos521/Lectures/lec3.pdf}


\bibitem[CHHL18]{chhl}
Eshan Chattopadhyay, Pooya Hatami, Kaave Hosseini, Shachar Lovett:
Pseudorandom Generators from Polarizing Random Walks. CCC 2018: 1:1-1:21  

\bibitem[CHLT19]{chlt}
Eshan Chattopadhyay, Pooya Hatami, Shachar Lovett, Avishay Tal:
Pseudorandom Generators from the Second Fourier Level and Applications to AC0 with Parity Gates. ITCS 2019: 22:1-22:15

\bibitem[G16]{gavinsky}	
Dmitry Gavinsky:
Entangled simultaneity versus classical interactivity in communication complexity. STOC 2016: 877-884

\bibitem[GKK+09]{gavinskyetal}	
Dmitry Gavinsky, Julia Kempe, Iordanis Kerenidis, Ran Raz, Ronald de Wolf:
Exponential Separation for One-Way Quantum Communication Complexity, with Applications to Cryptography. SIAM J. Comput. 38(5): 1695-1708 (2008)


\bibitem[GPW17]{bpp}
Mika G\"{o}\"{o}s, Toniann Pitassi, Thomas Watson:
Query-to-Communication Lifting for BPP. FOCS 2017: 132-143

\bibitem[HHL18]{DBLP:journals/siamcomp/HatamiHL18}
Hamed Hatami, Kaave Hosseini, Shachar Lovett:
Structure of Protocols for XOR Functions. SIAM J. Comput. 47(1): 208-217 (2018)


\bibitem[KR11]{klartagregev}
Oded Regev, Bo\`az Klartag:
Quantum one-way communication can be exponentially stronger than classical communication. STOC 2011: 31-40

\bibitem[O'D14]{odonnell}	
Ryan O'Donnell:
Analysis of Boolean Functions. Cambridge University Press 2014, ISBN 978-1-10-703832-5, pp. I-XX, 1-423

\bibitem[R95]{DBLP:journals/cc/Raz95}
Ran Raz:
Fourier Analysis for Probabilistic Communication Complexity. Computational Complexity 5(3/4): 205-221 (1995)

\bibitem[R99]{raz}	
Ran Raz:
Exponential Separation of Quantum and Classical Communication Complexity. STOC 1999: 358-367


\bibitem[RT19]{raztal}
Ran Raz and Avishay Tal:
Oracle separation of {BQP} and {PH}. STOC 2019: 13-23


\bibitem[Sh94]{shor}
Peter W. Shor:
Polynominal time algorithms for discrete logarithms and factoring on a quantum computer. ANTS 1994: 289

\bibitem[Si94]{simon}
Daniel R. Simon:
On the Power of Quantum Computation. FOCS 1994: 116-123


\bibitem[Wik1]{chebyshev}
Chebyshev's Inequality - Wikipedia

\bibitem[Wik2]{bernstein}
Bernstein Inequalities - Wikipedia 



\end{thebibliography}

\end{document}